\newenvironment{mitem}
{\begin{itemize}
  \setlength{\itemsep}{1pt}
  \setlength{\parskip}{0pt}
  \setlength{\parsep}{0pt}}
{\end{itemize}}
\newtheorem{theorem}{Theorem}
\newtheorem*{thm*}{Theorem}
\newtheorem{proposition}[theorem]{Proposition}
\newtheorem*{prop*}{Proposition}
\newtheorem{lemma}[theorem]{Lemma}
\newtheorem{corollary}[theorem]{Corollary}
\theoremstyle{definition}
\newtheorem{definition}{Definition}
\theoremstyle{remark}
\newcommand{\ZX}{\textsc{zx}}
\newcommand{\NOT}{\textsc{not}}
\newcommand{\SWAP}{\textsc{swap}}
\newcommand{\abs}[1]{\left| #1 \right|}
\newcommand{\avg}[1]{\left\langle #1 \right\rangle}
\newcommand{\ZZ}{\mathbb{Z}}
\newcommand{\ket}[1]{\left| #1 \right>} 
\newcommand{\intf}[1]{\left\llbracket #1 \right\rrbracket} 
\tikzstyle{none}=[inner sep=0pt]
\tikzstyle{rn}=[circle,fill=Red,draw=Black,line width=0.8 pt,minimum size=5pt,inner sep=0pt]
\tikzstyle{gn}=[circle,fill=Lime,draw=Black,line width=0.8 pt,minimum size=5pt,inner sep=0pt]
\tikzstyle{Hadamard}=[rectangle,fill=Yellow,draw=Black,minimum size=8pt,inner sep=0pt,label={center:$\scriptstyle\mathrm{H}$}]
\tikzstyle{HadSpek}=[rectangle,fill=Yellow,draw=Black,minimum size=6pt]
\tikzstyle{gphase}=[rounded rectangle,rounded rectangle arc length=90,fill=Lime!20,inner sep=2pt]
\tikzstyle{rphase}=[rounded rectangle,rounded rectangle arc length=90,fill=Red!20,inner sep=2pt]
\tikzstyle{bphase}=[rounded rectangle,rounded rectangle arc length=90,fill=Black!20,inner sep=2pt]
\tikzstyle{normalrect}=[rectangle,fill=white,draw=black,minimum height=12pt,minimum width=12pt,inner sep=0pt]
\tikzstyle{bn}=[circle,fill=black,draw=black,inner sep=1pt]
\tikzstyle{whn}=[circle,fill=white,draw=black,inner sep=1pt]
\tikzstyle{bigcloud}=[cloud,fill=white,draw=black]
\tikzstyle{diagram}=[rectangle,fill=white,draw=black,minimum height=0.5cm,minimum width=1cm,inner sep=0pt]
\tikzstyle{dagger_morphism}=[trapezium,trapezium left angle=90,trapezium right angle=60,fill=white,draw=black,minimum height=14pt,minimum width=12pt,inner sep=0pt]
\tikzstyle{morphism}=[trapezium,trapezium left angle=90,trapezium right angle=120,fill=white,draw=black,minimum height=14pt,minimum width=12pt,inner sep=0pt]
\tikzstyle{dualmorphism}=[trapezium,trapezium right angle=90,trapezium left angle=60,fill=white,draw=black,minimum height=14pt,minimum width=12pt,inner sep=0pt]
\tikzstyle{dtr}=[regular polygon,regular polygon sides=3,shape border rotate=180,fill=white,draw=black,minimum size=22pt,inner sep=0pt]
\tikzstyle{utr}=[regular polygon,regular polygon sides=3,fill=white,draw=black,minimum size=22pt,inner sep=0pt]
\tikzstyle{bigdtr}=[isosceles triangle,shape border rotate=-90,fill=white,draw=black,isosceles triangle stretches,minimum height=22pt,minimum width=2cm,inner sep=0pt]
\tikzstyle{squ}=[rectangle,fill=white,draw=black,minimum height=0.5cm,minimum width=0.6cm,inner sep=0pt]
\tikzstyle{twqu}=[rectangle,fill=white,draw=black,minimum height=1.25cm,minimum width=0.6cm,inner sep=0pt]
\tikzstyle{thrqu}=[rectangle,fill=white,draw=black,minimum height=2cm,minimum width=0.6cm,inner sep=0pt]
\tikzstyle{bn}=[circle,fill=black,draw=black,inner sep=0pt,minimum size=4pt]
\tikzstyle{whn}=[circle,fill=none,draw=black,minimum size=8pt]
\tikzstyle{xn}=[circle,fill=none,draw=none,minimum size=8pt]
\newcommand{\effect}[1]{\begin{tikzpicture}[baseline=.1cm]
	\begin{pgfonlayer}{nodelayer}
		\node [style=none] (0) at (0, 0) {};
		\node [style={#1}] (1) at (0, 0.25) {};
	\end{pgfonlayer}
	\begin{pgfonlayer}{edgelayer}
		\draw [] (0.center) to (1);
	\end{pgfonlayer}
\end{tikzpicture}}
\newcommand{\scalar}[1]{\begin{tikzpicture}[baseline=-.1cm]
	\begin{pgfonlayer}{nodelayer}
		\node [style={#1}] (0) at (0, 0) {};
	\end{pgfonlayer}
\end{tikzpicture}}
\newcommand{\state}[1]{\begin{tikzpicture}[baseline=-.1cm]
	\begin{pgfonlayer}{nodelayer}
		\node [style={#1}] (0) at (0, 0) {};
		\node [style=none] (1) at (0, 0.25) {};
	\end{pgfonlayer}
	\begin{pgfonlayer}{edgelayer}
		\draw (0) to (1.center);
	\end{pgfonlayer}
\end{tikzpicture}}
\newcommand{\phase}[1]{\begin{tikzpicture}[baseline=-.1cm]
	\begin{pgfonlayer}{nodelayer}
		\node [style=none] (0) at (0, 0.2) {};
		\node [style={#1}] (1) at (0, -0) {};
		\node [style=none] (2) at (0, -0.2) {};
	\end{pgfonlayer}
	\begin{pgfonlayer}{edgelayer}
		\draw [] (0.center) to (1);
		\draw [] (1) to (2.center);
	\end{pgfonlayer}
\end{tikzpicture}}
\newcommand{\splitnode}[1]{\begin{tikzpicture}[baseline=-0.1cm]
	\begin{pgfonlayer}{nodelayer}
		\node [style={#1}] (0) at (0, -0) {};
		\node [style=none] (1) at (0, -0.25) {};
		\node [style=none] (2) at (-0.25, 0.25) {};
		\node [style=none] (3) at (0.25, 0.25) {};
	\end{pgfonlayer}
	\begin{pgfonlayer}{edgelayer}
		\draw [bend right=15] (2.center) to (0);
		\draw [bend left=15] (3.center) to (0);
		\draw (0) to (1.center);
	\end{pgfonlayer}
\end{tikzpicture}}
\newcommand{\joinnode}[1]{\begin{tikzpicture}[baseline=-0.1cm]
	\begin{pgfonlayer}{nodelayer}
		\node [style={#1}] (0) at (0, -0) {};
		\node [style=none] (1) at (0, 0.25) {};
		\node [style=none] (2) at (0.25, -0.25) {};
		\node [style=none] (3) at (-0.25, -0.25) {};
	\end{pgfonlayer}
	\begin{pgfonlayer}{edgelayer}
		\draw [bend right=15] (2.center) to (0);
		\draw [bend left=15] (3.center) to (0);
		\draw (0) to (1.center);
	\end{pgfonlayer}
\end{tikzpicture}}
\newcommand{\HadSpek}[0]{\begin{tikzpicture}[baseline=-0.15cm]
	\begin{pgfonlayer}{nodelayer}
		\node [style=HadSpek] (0) at (0, 0) {};
		\node [style=none] (1) at (0, 0.25) {};
		\node [style=none] (2) at (0, -0.25) {};
	\end{pgfonlayer}
	\begin{pgfonlayer}{edgelayer}
		\draw (0.center) to (1);
		\draw (2.center) to (1);
	\end{pgfonlayer}
\end{tikzpicture}}
\newcommand{\gendiagram}[1]{\begin{tikzpicture}[scale=0.5,baseline=-0.1cm]
	\begin{pgfonlayer}{nodelayer}
		\node [style=diagram] (0) at (0, -0) {#1};
		\node [style=none] (1) at (-0.75, 1) {};
		\node [style=none] (2) at (0.75, 1) {};
		\node [style=none] (3) at (-0.75, -1) {};
		\node [style=none] (4) at (0.75, -1) {};
		\node [style=none] (5) at (0, 0.75) {$\ldots$};
		\node [style=none] (6) at (0, -0.75) {$\ldots$};
	\end{pgfonlayer}
	\begin{pgfonlayer}{edgelayer}
		\draw (1.center) to (3.center);
		\draw (2.center) to (4.center);
	\end{pgfonlayer}
\end{tikzpicture}}
\newlength{\Speklength}
\newcommand{\Spekcolour}[0]{teal}
\newcommand{\SpekOneSquare}[4]{\begin{tikzpicture}[baseline=-0.1cm,fill=\Spekcolour, every path/.style={draw}]
  #1 (0,-\Speklength) rectangle +(\Speklength,\Speklength);
  #2 (\Speklength,-\Speklength) rectangle +(\Speklength,\Speklength);
  #3 (0,0) rectangle +(\Speklength,\Speklength);
  #4 (\Speklength,0) rectangle +(\Speklength,\Speklength);
\end{tikzpicture}}
\newcommand{\SpekTwoSquare}[1]{\begin{tikzpicture}[baseline=-0.1cm,fill=\Spekcolour]
  #1%
  \foreach \x in {-2\Speklength,-\Speklength,0,\Speklength}
    \foreach \y in {-2\Speklength,-\Speklength,0,\Speklength}
      \draw (\x,\y) rectangle +(\Speklength,\Speklength);
\end{tikzpicture}}
\newcommand{\Spekfill}[2]{\fill (#1\Speklength,#2\Speklength) rectangle +(\Speklength,\Speklength);
}
\title{A complete graphical calculus for Spekkens' toy bit theory}
\author{Miriam Backens
\institute{Department of Computer Science,\\ University of Oxford,\\ Oxford, United Kingdom}
\email{miriam.backens@cs.ox.ac.uk}
\and
Ali Nabi Duman
\institute{Department of Mathematics,\\ King Fahd University of Petroleum \& Minerals,\\ Dhahran, Saudi Arabia}
\email{alinabi@gmail.com}
}
\begin{document}

\maketitle

\begin{abstract}
 While quantum theory cannot be described by a local hidden variable model, it is nevertheless possible to construct such models that exhibit features commonly associated with quantum mechanics.
 These models are also used to explore the question of $\psi$-ontic versus $\psi$-epistemic theories for quantum mechanics.
 Spekkens' toy theory is one such model.
 It arises from classical probabilistic mechanics via a limit on the knowledge an observer may have about the state of a system.
 The toy theory for the simplest possible underlying system closely resembles stabilizer quantum mechanics, a fragment of quantum theory which is efficiently classically simulable but also non-local.
 Further analysis of the similarities and differences between those two theories can thus yield new insights into what distinguishes quantum theory from classical theories, and $\psi$-ontic from $\psi$-epistemic theories.

 In this paper, we develop a graphical language for Spekkens' toy theory.
 Graphical languages offer intuitive and rigorous formalisms for the analysis of quantum mechanics and similar theories.
 To compare quantum mechanics and a toy model, it is useful to have similar formalisms for both.
 We show that our language fully describes Spekkens' toy theory and in particular, that it is complete: meaning any equality that can be derived using other formalisms can also be derived entirely graphically.
 Our language is inspired by a similar graphical language for quantum mechanics called the ZX-calculus.
 Thus Spekkens' toy bit theory and stabilizer quantum mechanics can be analysed and compared using analogous graphical formalisms.
\end{abstract}

\section{Introduction}

The study of the differences between quantum physical behaviour and classical behaviour is at the heart of much foundational research in quantum physics.
The usual way of analysing these differences is by finding phenomena that are intrinsic to one theory and do not appear in the other, for example the violation of Bell inequalities \cite{bell_einstein-podolsky-rosen_1964}.
Yet there is also another approach: that of building toy models which reproduce phenomena generally considered quantum even though their description is entirely rooted in classical physics.

One such toy model is Spekkens' toy theory, first introduced in \cite{spekkens_evidence_2007}.
The original description of the model was informal, but it was put into equational form in \cite{coecke_phase_2011}.
Furthermore, the toy model was redefined in a rigorous way in \cite{spekkens_quasi-quantization_2014}; this redefinition is consistent with \cite{coecke_phase_2011}.
Despite being a local hidden variable theory, Spekkens' toy theory reproduces many features of quantum mechanics, e.g. incompatibility of certain observables, teleportation, and no-cloning.
The toy theory for the simplest kind of system -- the toy bit theory -- closely resembles the theory of stabilizer quantum mechanics \cite{spekkens_evidence_2007,pusey_stabilizer_2012}.
Stabilizer quantum mechanics is a fragment of quantum theory resulting from a restriction of the allowed operations to preparation of states in the computational basis, unitary Clifford operations, and computational basis measurements \cite{gottesman_stabilizer_1997}.

There are also some phenomena that appear in stabilizer quantum theory but are not replicated in the toy model, e.g.\ the above-mentioned violation of Bell inequalities.
This is related to the fact that Spekkens' toy theory is a $\psi$-epistemic theory by construction.
A $\psi$-epistemic theory is a theory where the state that an observer assigns to a system, is not real: it is only an artefact of the restricted knowledge of the observer.
Quantum theory on the other hand is considered to be $\psi$-ontic, i.e.\ it is a theory where the states an observer assigns to a system are real \cite{pusey_reality_2012}.

To compare and contrast the two theories -- Spekkens' toy bit theory and stabilizer quantum theory -- in more detail, it is useful to have similar mathematical formalisms for describing both.
Particularly, what are needed are \emph{high-level} formalisms, which hide some of the details of the underlying theories so as to focus on conceptual properties.
Graphical languages are high-level languages that use two-dimensional diagrams.
These two-dimensional languages allow \emph{parallel composition} -- applying transformations to two different systems at the same time -- to be separated from \emph{sequential composition} -- the application of transformations to the same system at different times -- by designating one dimension to roughly correspond to ``space'' and the other to ``time''.
A major difference between classical physics and quantum physics is the way the state spaces of systems compose in parallel, i.e.\ when the systems are put ``side by side'' \cite{abramsky_categorical_2008}: classically, the resulting state space is the Cartesian product of the original spaces, meaning each state of the joint system can be described by specifying separate states for each of the component systems.
For quantum systems, on the other hand, the joint state space is the tensor product of the original state spaces and joint states may not correspond to well-defined states of the separate systems: they can be entangled.
Thus in the study of quantum foundations, the study of composite systems is central, and graphical languages offer an intuitive way of doing that.

Spekkens' toy bit theory and stabilizer quantum theory have previously been studied using the stabilizer formalism \cite{pusey_stabilizer_2012}.
The two theories have also been compared using methods from categorical quantum mechanics \cite{coecke_toy_2011}.
Categorical quantum mechanics is the analysis of quantum mechanics via the mathematical formalisms of category theory, pioneered by Abramsky and Coecke \cite{abramsky_categorical_2004}.
While categorical quantum mechanics has given rise to a range of graphical languages for quantum theory, Spekkens' toy theory has not been analysed graphically before.
The \ZX-calculus is one such graphical language based on categorical quantum mechanics \cite{coecke_interacting_2008,coecke_interacting_2011}; we use it as inspiration for the construction of a graphical calculus for the toy theory.

For a graphical formalism to capture a model fully, it needs to satisfy several properties.
Firstly, it should be \emph{universal}, i.e.\ it should be possible to represent graphically any process allowed in the model.
The formalism should furthermore be \emph{sound}, i.e.\ any equality that can be derived graphically should be derivable using other standard formalisms for the model.
Lastly, the graphical formalism should be \emph{complete}, meaning that any equality that can be derived using other standard formalisms can also be derived graphically.
The \ZX-calculus is universal, sound \cite{coecke_interacting_2011}, and complete \cite{backens_zx-calculus_2013} for pure state qubit stabilizer quantum mechanics with post-selected measurements.
We show that our graphical calculus satisfies all three of these properties for the maximal knowledge fragment of the toy theory (which corresponds to pure states in quantum theory) with post-selected measurements.

We introduce Spekkens' toy theory and stabilizer quantum mechanics in section \ref{s:theories}.
Section \ref{s:background} contains an overview over graphical languages and how to make them rigorous.
We then define the graphical calculus for Spekkens' toy bit theory in section \ref{s:toy_theory} and prove that it is universal and sound.
The completeness proof for the graphical calculus is given in section \ref{s:graphical_calculus}, with conclusions in section \ref{s:conclusions}.

\section{Spekkens' toy theory and stabilizer quantum mechanics}
\label{s:theories}

In this section we introduce Spekkens' toy bit theory as well as stabilizer quantum mechanics.
We also present some of the standard formalisms used to analyse the two theories.

\subsection{Definition of Spekkens' toy bit theory}
\label{s:spekkens}

Spekkens' toy theory is a local hidden variable theory that nevertheless displays many of the same properties and effects as quantum mechanics \cite{spekkens_evidence_2007}.
It arises from classical probabilistic mechanics via an epistemic restriction, i.e.\ a restriction on the knowledge an observer may have about the state of the system \cite{spekkens_quasi-quantization_2014}.

We shall only consider the toy theory for the simplest non-trivial system here: the toy bit theory.
A single toy bit is a system with four states, these are the \emph{ontic states} or states of reality.
These states are often drawn on a 2 by 2 grid as in Fig. \ref{fig:toy_bit_state_space} a.
An ontic state can be described by giving the values -- 0 or 1 -- for two variables $X$ and $Z$.

\begin{figure}
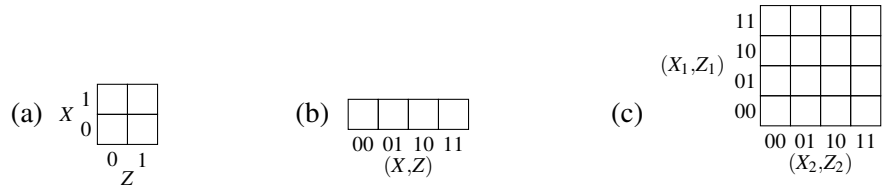

 \centering
 (a) \input{tikz_files/spekkens_state_square.tikz} $\qquad\qquad$
 (b) $\;$\input{tikz_files/spekkens_state_line.tikz} $\qquad\qquad$
 (c) \input{tikz_files/spekkens_2state.tikz}
 \caption{(a) \& (b) Visualisations of the state space of a single toy bit. (c) Visualisation of the joint state space of two toy bits. Specific states can be represented by colouring in cells in the diagram.}
 \label{fig:toy_bit_state_space}
\end{figure}

An observer or experimenter working with toy bits does not have direct access to the ontic states, instead they assign to a system an \emph{epistemic state}, a state of knowledge.
The observer can learn about the state of a system by measuring \emph{quadrature variables}, which are linear combinations of the variables $X$ and $Z$ -- for a single toy bit, these are $X$, $Z$, or $X\oplus Z$, where $\oplus$ denotes addition modulo 2.
As in quantum mechanics, the quadrature variables $X$ and $Z$ for the same toy bit are considered non-commuting: this is done by imposing a commutation relation $[\cdot,\cdot]$ satisfying:
\begin{equation}
 [X,X]=0=[Z,Z] \qquad \text{and} \qquad [X,Z] = 1 = [Z,X],
\end{equation}
which is furthermore linear, so that e.g.:
\begin{equation}
 [X\oplus Z,Z] = 1.
\end{equation}
Now the knowledge an observer may have is determined by the \emph{principle of classical complementarity} \cite{spekkens_quasi-quantization_2014}:
\begin{quotation}
 \noindent The valid epistemic states are those where an agent knows the values of a set of commuting quadrature variables and is ignorant otherwise.
\end{quotation}
\emph{States of maximal knowledge} are those epistemic states where the observer knows the values of a maximal set of commuting quadrature variables, i.e.\ a set with which no other quadrature variable commutes.
These states correspond to pure states in quantum theory, and we will only consider states of maximal knowledge in this paper.

There are six states of maximal knowledge of a single toy bit.
Single-toy bit states can be visualised on the 2 by 2 grid by colouring in those ontic states that are consistent with the knowledge of the observer.
For example, the state $X\oplus Z=0$ is shown in Fig. \ref{fig:toy_bit_state_examples} a.

\begin{figure}
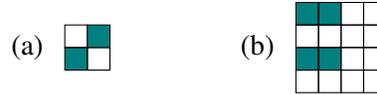

 \centering
 (a) $\;$ \SpekOneSquare{\fill}{\draw}{\draw}{\fill} $\qquad\qquad$
 (b) $\;$ \SpekTwoSquare{\Spekfill{-2}{-1} \Spekfill{-1}{-1} \Spekfill{-2}{1} \Spekfill{-1}{1}}
 \caption{(a) The single-toy bit state characterised by $X\oplus Z=0$. (b) The joint state of two toy bits corresponding to $Z_1=1\wedge X_2=0$.}
 \label{fig:toy_bit_state_examples}
\end{figure}

Multiple toy bits can be considered jointly, in which case the variables $X$ and $Z$ for separate subsystems are considered to commute, i.e.:
\begin{equation}
 [X_i,Z_j] = \delta_{ij},
\end{equation}
where the subscripts denote the subsystem to which the variable belongs and $\delta_{ij}$ is 1 if $i=j$ and 0 otherwise.
Thus, e.g.\ $Z_1=1\wedge X_2=0$ is an example of a valid joint state of two toy bits.
Joint states of two toy bits can be visualised by stretching the diagram for one bit into a line (see Fig. \ref{fig:toy_bit_state_space} b) and then combining two such lines into a 4 by 4 grid as shown in Fig. \ref{fig:toy_bit_state_space} c. 
The above-mentioned state is drawn in Fig. \ref{fig:toy_bit_state_examples} b.

The valid reversible transformations in the toy theory are those permutations of the ontic states that map all valid epistemic states to valid epistemic states.
Any set of commuting quadrature variables is a valid measurement.

As a different notation, the ontic states of the toy theory are sometimes numbered 1 through 4 in the order in which they appear in Fig. \ref{fig:toy_bit_state_space} b \cite{coecke_spekkens_2011}.
Epistemic states can then be denoted by sets, e.g.\ the state $X\oplus Z=0$ corresponds to the set $\{1,4\}$ and the two-toy bit state $Z_1=1\wedge X_2=0$ corresponds to:
\begin{equation}
 \{(2,1),(2,2),(4,1),(4,2)\}.
\end{equation}
Rather than considering states of $n$ toy bits to be sets, they can also be seen as relations from the one-element set $I=\{\bullet\}$ into $IV^n$, the $n$-fold Cartesian product of the set $IV=\{1,2,3,4\}$.
Formally, a relation $R$ between sets $A$ and $B$ is a subset of the Cartesian product $A\times B$.
We use the notation $a\sim b$ to indicate that $(a,b)\in R$.
Thus, the state $X\oplus Z=0$ shown in Fig. \ref{fig:toy_bit_state_examples} a corresponds to the relation:
\begin{equation}
 \bullet \sim \{1,4\}.
\end{equation}
Similarly, post-selected measurements on $n$ toy bits can be seen as relations from $IV^n$ to $I$, e.g.\ the single-toy bit measurement of the $Z$ variable with outcome 1 corresponds to:
\begin{equation}
 \{2,4\}\sim\bullet.
\end{equation}
Reversible transformations can also be considered as relations.
This perspective puts state preparation and post-selected measurements on an equal footing with reversible transformations and allows any process on toy bits to be considered as a relation.

In a slight abuse of notation, we use states and states-as-relations interchangeably.

\subsection{Stabilizer quantum mechanics}
\label{s:stabilizer_QM}

Stabilizer quantum mechanics is a restriction of the full quantum theory.
It includes only those states that are simultaneous eigenstates of several tensor products of Pauli operators, and unitary transformations that map this set of states back to itself.
This theory was first introduced in the context of error-correcting codes \cite{gottesman_stabilizer_1997}.

In general, $2^n$ complex numbers are required to specify a quantum state on $n$ qubits -- these can be, for example, the components of the vector describing the state in terms of the computational basis.
For stabilizer states, there exists a more efficient description by specifying a generating set for the group of Pauli products that \emph{stabilizes} the state, i.e.\ maps it back to itself.
As an example, consider the Bell state:
\begin{equation}
 \frac{1}{\sqrt{2}}(\ket{00}+\ket{11}),
\end{equation}
which is a stabilizer state.
It is stabilized by the following group of Pauli products:
\begin{equation}
 \{ I\otimes I, X\otimes X, Z\otimes Z, -Y\otimes Y\},
\end{equation}
which is generated e.g.\ by:
\begin{equation}\label{eq:Bell_stabilizer}
 \avg{X\otimes X, Z\otimes Z}.
\end{equation}

There exists a representation of pure $n$-qubit stabilizer states in terms of $2n$ by $n$ binary matrices called \emph{check matrix} \cite{calderbank_quantum_1997}.
Each column in the matrix corresponds to one of the Pauli products generating the stabilizer group, ignoring the factor of $\pm 1$.
A single Pauli matrix is encoded in two bits as follows:
\begin{align}
 I &\mapsto 00, \\
 X &\mapsto 01, \\
 Y &\mapsto 11, \text{ and} \\
 Z &\mapsto 10.
\end{align}
For a Pauli product on $n$ qubits, the $m$-th factor in the tensor product is represented by the $m$-th and $(m+n)$-th components of the vector.
Thus the generating set from \eqref{eq:Bell_stabilizer} yields the following check matrix:
\begin{equation}
 \begin{pmatrix} 0&1 \\ 0&1 \\ 1&0 \\ 1&0 \end{pmatrix},
\end{equation}
where the first column represents $X\otimes X$ and the second column $Z\otimes Z$.

The pure state qubit stabilizer fragment can also be defined operationally in terms of the following transformations \cite{nielsen_quantum_2010}:
\begin{itemize}
 \item preparation of states in the computational basis
 \item unitary Clifford operations, generated by the single-qubit Hadamard operator, $H$, and the phase operator, $S$:
  \[
   H = \frac{1}{\sqrt{2}} \begin{pmatrix}1&1\\1&-1\end{pmatrix} \quad \text{and} \quad S = \begin{pmatrix}1&0\\0&i\end{pmatrix},
  \]
  as well as the two-qubit controlled-\NOT{} operator:
  \begin{equation}\label{eq:controlled-NOT}
   C_X = \begin{pmatrix} 1&0&0&0 \\ 0&1&0&0 \\ 0&0&0&1 \\0&0&1&0 \end{pmatrix}
  \end{equation}
 \item measurements in the computational basis.
\end{itemize}
We make use of both the operational description and the binary formalism in later parts of this paper.

\section{Graphical languages}
\label{s:background}

Graphical languages provide an intuitive and high-level way of reasoning.
One of the reasons for this is the fact that they allow parallel and sequential composition to be denoted along two different dimensions.
By parallel and sequential composition, we mean that there are two different ways of composing transformations:
Parallel composition corresponds, roughly, to applying transformations to different (e.g.\ spatially separated) systems at the same time.
Sequential composition, on the other hand, corresponds to applying transformations to the same system at different times.

In linear algebraic notation, parallel and sequential composition are usually distinguished by different operator symbols.
For example, the parallel composition of two matrices $A$ and $B$ is the tensor product, denoted by $A\otimes B$.
The sequential composition of two matrices $A$ and $B$ is the matrix product, usually written simply as $AB$.
Long algebraic expressions can thus become difficult to parse: it is not easy to see how different components of the expression compose.

In a graphical language, parallel composition can be denoted by stacking symbols vertically, and sequential composition by juxtaposing them horizontally (or conversely, depending on convention).
An example of such a graphical notation is the quantum circuit notation \cite{deutsch_quantum_1989,nielsen_quantum_2010}.
With a two-dimensional notation, it is much easier to see how components compose, even in long and complicated expressions.
For example, the quantum circuit in Fig.\ \ref{fig:circuit_example} can be written algebraically as:
\begin{equation}\label{eq:circuit_example}
 (I\otimes I\otimes (U_4 U_3)) W (I\otimes V) (U_1\otimes U_2\otimes I),
\end{equation}
where $I$ denotes the single-qubit identity operator.
The diagram is much easier to take in at a glance than the algebraic expression.

\begin{figure}
 \centering
 \input{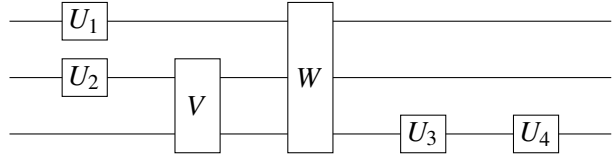}
 \caption{A quantum circuit diagram on three qubits. The operators $U_1,U_2,U_3,$ and $U_4$ are single-qubit unitaries, $V$ is a two-qubit unitary, and $W$ a three-qubit unitary.}
 \label{fig:circuit_example}
\end{figure}

Yet graphical languages are often introduced as informal personal short-hands and used to develop an intuitive understanding of a problem that can then be confirmed using a more rigorous but less intuitive language.
This means doing the same work twice: once graphically, then again in the alternative formalism.
To avoid this, the graphical languages need to be made rigorous; then reliable results can be derived entirely graphically.

\subsection{Making graphical languages rigorous}
\label{s:graphical_rigorous}

There are two steps to the process of making graphical languages rigorous: firstly, one needs to give an explicit translation between diagrams and algebraic terms. Secondly, one needs to prove that two diagrams that seem intuitively equal translate to algebraic terms that are equal.

We are here considering languages that are similar to quantum circuits in that they consist of boxes, which denote transformations, and wires, which correspond to systems or the identity transformations on those systems.

The natural formalism for making graphical languages rigorous is category theory, as \emph{monoidal categories} are the most general mathematical structures incorporating both parallel and sequential composition of transformations.
This research programme was begun by Joyal and Street, who used the theory of monoidal categories to give rigorous underpinnings to a range of graphical notations from Feynman diagrams to Petri Nets \cite{joyal_geometry_1991}.
An introduction to category theory aimed at physicists, can be found in \cite{coecke_categories_2010}; the standard textbook is \cite{mac_lane_categories_1998}.

There are other types of categories with additional structure which can be used to model more complicated graphical languages.
For example, \emph{symmetric monoidal categories} include a map that swaps two systems that have been composed in parallel.
These categories can be used to model graphical languages with wire crossings where ``wires don't tangle'': i.e.\ two wire crossings are the same as no crossing at all.
An example of this are quantum circuits, where wire crossings are used to denote the transformation interchanging two qubits, for example the circuit shown in Fig. \ref{fig:swap}.

\begin{figure}
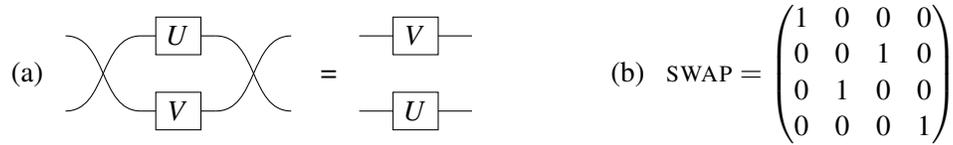

 \centering
 (a) $\;$ \input{tikz_files/circuit_swap1.tikz} $\;$ = $\;$ \input{tikz_files/circuit_swap2.tikz} $\qquad\qquad$
 (b) $\;$ \SWAP{} $= \begin{pmatrix} 1&0&0&0 \\ 0&0&1&0 \\ 0&1&0&0 \\ 0&0&0&1 \end{pmatrix}$
 \caption{(a) A quantum circuit equality where a wire crossing denotes the \SWAP{} map and $U,V$ are arbitrary single-qubit unitaries. (b) The \SWAP{} matrix.}
 \label{fig:swap}
\end{figure}

Quantum mechanics and similar theories are modelled as \emph{dagger compact closed categories} \cite{abramsky_categorical_2004}.
These are categories which have a dagger map -- a generalisation of the Hermitian adjoint for linear maps -- and a compact structure, which corresponds to completely entangled states and measurement effects.
Graphically, the compact structure is denoted by curved wires -- ``cups'' and ``caps'' in a language which is read from bottom to top -- which satisfy the \emph{snake equations} shown in Fig. \ref{fig:compact_structure}.
In quantum theory, the ``cup'' can be thought of as the preparation of a completely entangled state on two systems; the ``cap'' is the outcome of finding that same state when doing a joint measurement of two systems.
The snake equations thus correspond to a post-selected formulation of quantum teleportation.

\begin{figure}
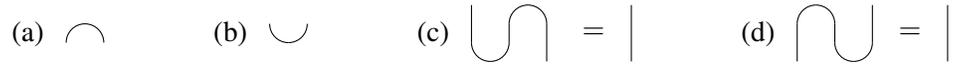

 \centering
 (a) $\;$ \input{tikz_files/cap_diagram.tikz} $\qquad\quad$
 (b) $\;$ \input{tikz_files/cup_diagram.tikz} $\qquad\quad$
 (c) $\;$ \input{tikz_files/snake1.tikz} $\qquad\quad$
 (d) $\;$ \input{tikz_files/snake2.tikz}
 \caption{(a) \& (b) The cap and cup denoting a compact structure. (c) \& (d) The snake equations satisfied by caps and cups. Note that this graphical notation is read from bottom to top rather than left-to-right.}
 \label{fig:compact_structure}
\end{figure}

In graphical languages based on dagger compact closed categories, two diagrams represent the same map if they are equal up to topological transformations that keep the inputs and outputs of the diagrams as a whole invariant \cite{selinger_dagger_2007}.
Topological transformations here are operations like lengthening or shortening wires, bending or straightening wires, or moving boxes around while keeping their connections the same.
Both the equality in Fig. \ref{fig:swap} and the snake equations in Fig. \ref{fig:compact_structure} are examples of such topological transformations.

Yet topological transformations are not enough to yield all desired equalities between diagrams: e.g.\ the diagram equality in Fig. \ref{fig:circuit_rule} is true, but it is not a topological transformation, and cannot be made into one by a change of notation either.
By assuming a set of axiomatic equalities called \emph{rewrite rules}, further graphical equalities can be derived using graphical rewriting.
The idea is that whenever two diagrams are equal, any time one of those diagrams appears as part of a larger diagram, it can be ``cut out'' and replaced by the other diagram.
We gloss over the details of this rewrite process here and assume that a simple copy-paste process works.
An example using quantum circuits is given in Fig. \ref{fig:circuit_rule} and Fig. \ref{fig:circuit_rewrite}.

\begin{figure}
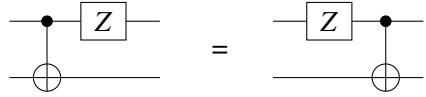

 \centering
 \input{tikz_files/cNOT-Z.tikz} $\quad$ = $\quad$ \input{tikz_files/Z-cNOT.tikz}
 \caption{A quantum circuit equality, where $Z$ denotes the Pauli-Z gate and the other symbol is the controlled-\NOT{} gate as defined in \eqref{eq:controlled-NOT}.}
 \label{fig:circuit_rule}
\end{figure}

\begin{figure}
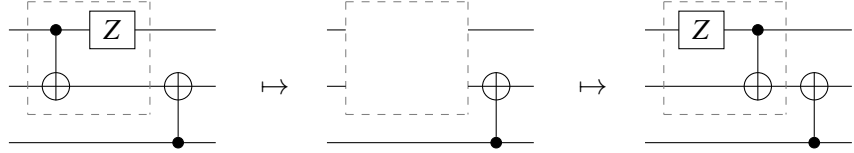

 \centering
 \input{tikz_files/circuit_cNOT-Z-cNOT.tikz} $\quad \mapsto \quad$ \input{tikz_files/circuit_cNOT-Z-cNOT_cut.tikz} $\quad \mapsto \quad$ \input{tikz_files/circuit_Z-cNOT-cNOT.tikz}
 \caption{Rewriting a quantum circuit using the rewrite rule given in Fig. \ref{fig:circuit_rule}: the left-hand side of that equality matches part of the diagram here, that part is ``cut out'', and the right-hand side of Fig. \ref{fig:circuit_rule} is pasted in.}
 \label{fig:circuit_rewrite}
\end{figure}

This process of introducing rewrite rules is fairly arbitrary, posing the question of which equalities should become rewrite rules and how many rewrite rules are necessary.
That issue can be simplified by considering the graphical calculus as a formal system.

\subsection{Graphical calculi as formal systems}
\label{s:formal_systems}

A graphical calculus with rewrite rules can be considered as a formal system that allows the derivation of equalities from the axioms given by the rewrite rules.
For such a calculus to be a useful alternative to a more standard algebraic language, it needs to satisfy certain properties, which are all relative to the interpretation of the diagrams.
Given a diagram $D$, we denote by $\intf{D}$ its interpretation, i.e.\ the process corresponding to $D$.
The desired properties of a graphical language include universality, soundness, and completeness, which are defined as follows:
\begin{mitem}
 \item A graphical calculus for a theory is \emph{universal} if any state or transformation allowed by the theory can be represented graphically, i.e.\ for any process $P$ there exists a diagram $D$ such that:
  \begin{equation}
   \intf{D} = P.
  \end{equation}
 \item It is \emph{sound} if any equality that can be derived graphically can also be derived in the underlying theory, i.e.\ if for any two diagrams $D_1$ and $D_2$:
  \begin{equation}
   D_1 = D_2 \implies \intf{D_1} = \intf{D_2}.
  \end{equation}
 \item A graphical calculus is \emph{complete} if any equality that can be derived in the underlying theory can also be derived graphically, i.e.\ if for any two diagrams $D_1$ and $D_2$:
  \begin{equation}
   \intf{D_1} = \intf{D_2} \implies D_1 = D_2.
  \end{equation}
\end{mitem}

Universality and soundness are usually straightforward to determine:
The property of universality does not involve the rewrite rules at all, it only depends on the diagram components and their interpretations.
Soundness does depend on the rewrite rules, but it can be checked on a rule-by-rule basis: if each rule is sound, then the graphical calculus as a whole is sound.
Completeness is more difficult to prove because it relies on the interaction of all the rewrite rules.

\section{A graphical calculus for Spekkens' toy bit theory}
\label{s:toy_theory}

In this section, we construct a graphical calculus for the maximal knowledge fragment of Spekkens' toy bit theory with post-selected measurements.
This graphical calculus is modelled after the \ZX-calculus for quantum mechanics \cite{coecke_interacting_2008}.
We define basic elements of the graphical notation and show how to combine them into more complicated diagrams.
Next, we give rewrite rules for those diagrams.
We argue that the calculus is universal and sound for Spekkens' toy bit theory.
Finally, we compare the graphical calculus for the toy theory to the \ZX-calculus.

\subsection{Components of the toy theory graphical calculus}

The graphical calculus for the toy theory is read from bottom to top.
Rather than denoting maps by labelled boxes, most maps are denoted by circular nodes which may have labels attached.
As before, we use $\intf{D}$ to denote the process corresponding to a diagram $D$.
 
Define \splitnode{gn} to be the following map from one toy bit to two toy bits:
\begin{equation}
 \intf{\splitnode{gn}} := \begin{cases}
    1 \sim \{(1,1),(2,2)\} \\
    2 \sim \{(1,2),(2,1)\} \\
    3 \sim \{(3,3),(4,4)\} \\
    4 \sim \{(3,4),(4,3)\} .
   \end{cases}
\end{equation}
This is a valid process in the toy theory; it can be considered to consist of the preparation of an ancilla in some fixed state followed by some joint reversible operation on the original toy bit and the ancilla.

The \emph{converse} of a relation $R$, denoted by $R^\dagger$, is defined as:
\begin{equation}
 R^\dagger = \{(b,a)\mid (a,b)\in R\}.
\end{equation}
Let \joinnode{gn} be the converse of \splitnode{gn}:
\begin{equation}
 \intf{\joinnode{gn}} := \intf{\splitnode{gn}}^\dagger.
\end{equation}
This is also a valid process in the toy theory, which can be thought of as a reversible operation on two toy bits, followed by a post-selected measurement of one of them.
As indicated by this notation, the relational converse is the toy theory equivalent of the Hermitian adjoint \cite{coecke_spekkens_2011}.

More complicated diagrams in the toy theory graphical calculus can be built by putting smaller diagrams side-by-side, which corresponds to taking the Cartesian product of the corresponding relations; i.e.\ if:
\begin{center}
 \gendiagram{$D$} $\quad$ and $\quad$ \gendiagram{$D'$}
\end{center}
denote two arbitrary diagrams, then:
\begin{equation}
 \left\llbracket \gendiagram{$D$} \; \gendiagram{$D'$} \right\rrbracket = \left\llbracket \gendiagram{$D$} \right\rrbracket \times \left\llbracket \gendiagram{$D'$} \right\rrbracket.
\end{equation}
Connecting the inputs of some diagram to the outputs of another corresponds to the operation of relational composition: if $R:A\to B$ and $S:B\to C$ are two relations, their composite $S\circ R$ is:
\begin{equation}
 S\circ R = \{(a,c)\mid \exists b\in B \text{ s.t. } (a,b)\in R \wedge (b,c)\in S\}.
\end{equation}
Graphically, assuming the number of outputs of $D$ is equal to the number of inputs of $D'$:
\begin{equation}
 \left\llbracket \input{tikz_files/composite_diagram.tikz} \right\rrbracket = \left\llbracket \gendiagram{$D'$} \right\rrbracket \circ \left\llbracket \gendiagram{$D$} \right\rrbracket.
\end{equation}

We introduce a short-hand notation for specific diagrams built from \splitnode{gn} and \joinnode{gn}: a green node with $n$ inputs and $m$ outputs for positive integers $n,m$ is defined as follows:
\begin{equation}\label{eq:spider}
 \begin{tikzpicture}[decoration=brace,baseline=-0.1cm]
	\begin{pgfonlayer}{nodelayer}
		\node [style=none] (0) at (-1.8, 0.5) {};
		\node [style=none] (1) at (-1.5, 0.45) {$\ldots$};
		\node [style=none] (2) at (-1.2, 0.5) {};
		\node [style=gn] (3) at (-1.5, -0) {};
		\node [style=none] (5) at (-1.8, -0.5) {};
		\node [style=none] (6) at (-1.5, -0.45) {$\ldots$};
		\node [style=none] (7) at (-1.2, -0.5) {};
		\node [style=none] (8) at (-1.9, 0.55) {};
		\node [style=none] (9) at (-1.5, 0.75) {$m$};
		\node [style=none] (10) at (-1.1, 0.55) {};
		\node [style=none] (11) at (-1.9, -0.55) {};
		\node [style=none] (12) at (-1.5, -0.75) {$n$};
		\node [style=none] (13) at (-1.1, -0.55) {};
	\end{pgfonlayer}
	\begin{pgfonlayer}{edgelayer}
		\draw [bend left=15] (3) to (7.center);
		\draw [bend right=15] (3) to (5.center);
		\draw [bend right=15] (0.center) to (3);
		\draw [bend right=15] (3) to (2.center);
		\draw [decorate] (8) to (10);
		\draw [decorate] (13) to (11);
	\end{pgfonlayer}
 \end{tikzpicture} \; :=
 \input{tikz_files/spider_no_phase.tikz}
\end{equation}
Such a node is called a \emph{spider}.

Represent the following four single-toy bit states by green nodes with \emph{phase labels}:
\begin{align}
  \intf{\state{gn,label={[gphase]right:$00$}}} &:= \{1,3\}, \label{eq:state00} \\
  \intf{\state{gn,label={[gphase]right:$01$}}} &:= \{1,4\}, \\
  \intf{\state{gn,label={[gphase]right:$10$}}} &:= \{2,3\}, \quad\text{and} \\
  \intf{\state{gn,label={[gphase]right:$11$}}} &:= \{2,4\}, \label{eq:state11}
\end{align}
and let $\state{gn}$ be short-hand for \state{gn,label={[gphase]right:$00$}}.
These two alternative notations make later definitions consistent.
Spiders can now be given phase labels via the following definition:
\begin{equation}\label{eq:phased_spider}
 \begin{tikzpicture}[decoration=brace,baseline=-0.1cm]
	\begin{pgfonlayer}{nodelayer}
		\node [style=none] (0) at (-1.8, 0.5) {};
		\node [style=none] (1) at (-1.5, 0.45) {$\ldots$};
		\node [style=none] (2) at (-1.2, 0.5) {};
		\node [style=gn,label={[gphase]right:$xy$}] (3) at (-1.5, -0) {};
		\node [style=none] (5) at (-1.8, -0.5) {};
		\node [style=none] (6) at (-1.5, -0.45) {$\ldots$};
		\node [style=none] (7) at (-1.2, -0.5) {};
		\node [style=none] (8) at (-1.9, 0.55) {};
		\node [style=none] (9) at (-1.5, 0.75) {$m$};
		\node [style=none] (10) at (-1.1, 0.55) {};
		\node [style=none] (11) at (-1.9, -0.55) {};
		\node [style=none] (12) at (-1.5, -0.75) {$n$};
		\node [style=none] (13) at (-1.1, -0.55) {};
	\end{pgfonlayer}
	\begin{pgfonlayer}{edgelayer}
		\draw [bend left=15] (3) to (7.center);
		\draw [bend right=15] (3) to (5.center);
		\draw [bend right=15] (0.center) to (3);
		\draw [bend right=15] (3) to (2.center);
		\draw [decorate] (8) to (10);
		\draw [decorate] (13) to (11);
	\end{pgfonlayer}
 \end{tikzpicture} \; := \; \input{tikz_files/spider_def_Spek.tikz}
\end{equation}
where $x,y\in\{0,1\}$.
Furthermore, spiders without inputs can be defined by composing \state{gn} and a spider with one input.
Let \effect{gn} be the converse of \state{gn}, seen as a relation:
\begin{equation}
 \intf{\effect{gn}} := \begin{cases} 1\sim\bullet \\ 3\sim\bullet. \end{cases}
\end{equation}
Then arbitrary spiders with no outputs can be defined as composites of a one-output spider and \effect{gn}.
In this way, definitions \eqref{eq:spider} and \eqref{eq:phased_spider} can be extended to arbitrary non-negative numbers of inputs and outputs $n$ and $m$.

Let \HadSpek{} be the following reversible single-toy bit operation:
\begin{equation}
  \intf{\HadSpek} := \begin{cases}
    1 \sim 1 \\
    2 \sim 3 \\
    3 \sim 2 \\
    4 \sim 4.
   \end{cases}
 \end{equation}
As a final short-hand, define red spiders as green spiders with copies of \HadSpek{} on all inputs and outputs:
\begin{equation}
 \input{tikz_files/colour_Spek_def.tikz}
\end{equation}

A single straight wire corresponds to the identity relation and a wire crossing is the obvious \SWAP{} relation interchanging the states of the two subsystems.
A ``cup'' is interpreted as follows:
\begin{equation}
 \intf{\input{tikz_files/cup_diagram.tikz}} = \{(1,1),(2,2),(3,3),(4,4)\},
\end{equation}
and the cap is its converse.

As \splitnode{gn}, \joinnode{gn}, \state{gn, label={[gphase]right:$xy$}}, and \effect{gn} are all special cases of green spiders, the graphical calculus can be considered to consist of green phased spiders with $n$ inputs and $m$ outputs, where $n$ and $m$ are now non-negative integers; red phased spiders with arbitrary numbers of inputs and outputs; and \HadSpek{}.

Spiders with exactly one input and one output are called \emph{phase shifts}, these are the only spiders representing reversible operations.
A diagram with neither inputs nor outputs is called a \emph{scalar diagram}; the same terminology is also used for parts of larger diagrams that are disconnected from all inputs or outputs of the large diagram.

\subsection{Rewrite rules of the toy theory graphical calculus}

We postulate the following rewrite rules for the toy theory graphical calculus.
Any rule given here can also be used with the colours red and green swapped.
Rules can furthermore be used upside-down.

\textbf{Spider rule and loop rule}: Two nodes of the same colour can merge if they are connected by an edge, in that case their phase labels combine by bit-wise addition modulo 2.
Self-loops can be removed.
\begin{center}
 \input{tikz_files/spider_Spek.tikz} $\qquad\qquad$ \input{tikz_files/loop_Spek.tikz}
\end{center}
\vskip 0.2cm

\textbf{Identity rule, bialgebra rule, and copy rule}: A node with one input and one output and no phase label (or, equivalently, phase $00$) is the same as an edge.
The bialgebra rule allows a certain pattern of two red and two green nodes to be replaced by just one red and green node.
A node of one colour with one input and two outputs copies the zero phase state of the other colour.
\begin{center}
 \input{tikz_files/identity_Spek.tikz} $\qquad\qquad\quad$  \input{tikz_files/bialgebra2.0.tikz} $\qquad\qquad\quad$ \input{tikz_files/copy2.0.tikz}
\end{center}
\vskip 0.2cm

\textbf{$11$-copy rule and $11$-commutation rule}: A $11$-phase shift is copied by a node of the other colour.
It can also be moved past any phase shift of the other colour, swapping the two bits of that phase label in the process.
\begin{center}
 \input{tikz_files/11-copy2.0.tikz} $\qquad\qquad\qquad$ \input{tikz_files/11-comm2.0.tikz}
\end{center}
\vskip 0.2cm

\textbf{Colour change rule and Euler decomposition of \HadSpek{}}: The \HadSpek{} node swaps the colour of red and green nodes when it is applied to each input and output.
Furthermore, \HadSpek{} can be replaced by three green and red nodes of alternating colours, each with phase $01$.
This rule is called Euler decomposition in analogy to the Euler decomposition of general rotations in three-dimensional space into three rotations about two distinct axes.
\begin{center}
 \input{tikz_files/colour_Spek.tikz} $\qquad\qquad\qquad$
 \input{tikz_files/Euler_Spek.tikz}
\end{center}
\vskip 0.2cm

Whenever a rule holds for any number of edges, that number may be zero.
For example, the colour change rule with zero inputs and zero outputs implies that \scalar{rn, label={[rphase]right:$xy$}} = \scalar{gn, label={[gphase]right:$xy$}} for any $x,y\in\{0,1\}$.

There are also two meta rules, rules that are not specified in terms of diagram equalities.
The first of these is that ``only the topology matters''.
This means that two diagrams represent the same process whenever they contain the same set of nodes connected up in the same ways, no matter how those nodes are arranged on the plane.
Secondly, we ``ignore all scalars that do not represent the empty relation'', meaning scalar subdiagrams can simply be dropped as long as they do not represent the empty relation.

\subsection{Universality of the graphical calculus}

The graphical calculus for the toy theory as defined in the previous two subsections is universal for the maximal knowledge fragment of Spekkens' toy bit theory with post-selected measurements.
This follows from the category-theoretical formulation of the toy theory in \cite{coecke_spekkens_2011}, where it is shown that all processes in the toy theory arise -- via parallel and sequential composition, and taking the relational converse -- from the 24 reversible transformations of a single toy bit together with a specific map called $\delta$ from one toy bit to two toy bits, and a post-selected measurement outcome denoted $\epsilon$.
It is straightforward to see that \HadSpek{} and the phase shifts suffice to construct all 24 reversible single-toy bit transformations, which correspond to the 24 permutations of the ontic states.
The maps $\delta$ and $\epsilon$ from \cite{coecke_spekkens_2011} are exactly the maps denoted by \splitnode{gn} and \effect{gn} here.
The graphical calculus allows parallel and sequential composition, as well as the taking of relational converses, which corresponds to flipping diagrams upside-down.
Therefore any process in the maximal knowledge fragment of Spekkens' toy bit theory with post-selected measurements can be represented graphically.

\subsection{Soundness of the graphical calculus}

Most of the rewrite rules of the toy theory graphical calculus can straightforwardly be checked to be sound by translating the diagrams on both sides of the equality into the corresponding maps.
For example, the left-hand side of the copy rule is, by definition, equal to:
\begin{center}
 \input{tikz_files/copy_LHS.tikz}.
\end{center}
Using the definitions of \state{gn}, \HadSpek{}, and \splitnode{gn}, this diagram can be translated to the state:
\begin{equation}
 \{(1,1),(1,3),(3,1),(3,3)\}.
\end{equation}
The right-hand side of the copy rule translates to:
\begin{equation}\label{eq:copy_RHS}
 \{1,3\}\times\{1,3\}.
\end{equation}
By explicitly constructing the Cartesian product in \eqref{eq:copy_RHS}, the two relations are seen to be the same.
Therefore the copy rule is sound.

The other rewrite rules with fixed numbers of inputs and outputs -- i.e.\ the identity rule, bialgebra rule, 11-commutation rule, and Euler decomposition rule -- can be checked in the same way.
The colour change rule is sound by definition.
Soundness of both the 11-copy rule and the loop rule can be verified by induction over the number of outputs and/or inputs.

For soundness of the spider rule, we again rely on results from the categorical formulation of Spek\-kens' toy bit theory.
As shown in \cite{coecke_phase_2011}, the maps \splitnode{gn} and \effect{gn} form a category-theoretical \emph{observable}.
This means that any connected diagram constructed from these maps, their converses, wire crossings, and curved wires is determined solely by its number of inputs and outputs.
Graphically, this corresponds exactly to the spider law without phase labels \cite{coecke_povms_2008}.
The states \state{gn, label={[gphase]right:$xy$}} form a \emph{phase group} for this observable \cite{coecke_phase_2011}.
In particular, they form a group under the operation given by composition with \joinnode{gn}:
\begin{equation}
 \input{tikz_files/phase_group_operation.tikz},
\end{equation}
with group identity \state{gn} and all group elements being self-inverse.
From this, it follows that the spider law with phase labels is also sound.

Soundness of the topology meta-rule also follows from the category-theoretical formulation of the toy theory: The toy theory is modelled as a dagger compact closed category, therefore the results given in section \ref{s:graphical_rigorous} apply.

The ignore-scalars rule is also sound by the category-theoretical formulation.
There are exactly two relations from $I$ to $I$, the identity relation $\{(\bullet,\bullet)\}$ and the empty relation $\emptyset$.
Composing any relation with $\{(\bullet,\bullet)\}$ does not change the relation.
Thus dropping scalar subdiagrams is justified, as long as they are not the empty relation.

\subsection{The toy theory graphical calculus and the \ZX-calculus}

The toy theory graphical calculus is modelled after the \ZX-calculus for pure state stabilizer quantum mechanics with post-selected measurements, which also consists of green and red phased spiders and yellow nodes that change the colour of spiders \cite{coecke_interacting_2008,coecke_interacting_2011}.

The \ZX-calculus arises from the categorical formulation of quantum mechanics, and as explained in the previous section, the toy theory graphical calculus is closely related to the categorical formulation of Spekkens' toy bit theory.
Category-theoretically, the only difference between the toy theory and stabilizer quantum theory is the phase group of the respective observables: for the toy theory the phase group is isomorphic to the Klein Four group $\ZZ_2\times\ZZ_2$, whereas for stabilizer quantum theory the phase group is isomorphic to the cyclic group of order 4, $\ZZ_4$ \cite{coecke_phase_2011}.

Correspondingly, the rewrite rules of the toy theory graphical calculus that do not involve specific phases are exactly the same as those of the \ZX-calculus if the phase groups are swapped out.
In the stabilizer \ZX-calculus, the phase group is generally denoted by the angles $\{-\pi/2,0,\pi/2,\pi\}$ under addition modulo $2\pi$.
The Euler decomposition rule and 11-copy rules also have straightforward analogues in the \ZX-calculus, with $\pi/2$ phase shifts appearing in the Euler decomposition and $\pi$ taking the role of $11$.

The \ZX-calculus analogue to the $11$-commutation rule is the $\pi$-commutation rule:
\begin{equation}
 \input{tikz_files/pi-comm2.0.tikz}.
\end{equation}
At first glance this looks different to the $11$-commutation rule: the $\pi$-commutation rule sends any phase shift to its inverse whereas the $11$-commutation rule swaps the two bits denoting the phase.
In fact, both commutation rules can be expressed in the same way nevertheless.
Let $\varphi$ denote $11$ or $\pi$ and let $\theta$ be an arbitrary phase label for the respective theory.
We can write the two commutation rules in general form as:
\begin{equation}
 \input{tikz_files/general_comm.tikz},
\end{equation}
where $f$ is some map from the phase group to itself.
Then in both the \ZX-calculus for stabilizer quantum mechanics and in the graphical calculus for Spekkens' toy bit theory, the map $f$ can be characterised as follows: $f$ maps both $\varphi$ and the identity of the phase group back to themselves, but it swaps the remaining two elements of the phase group.

\section{Completeness of the graphical calculus}
\label{s:graphical_calculus}

We now show that the toy theory graphical calculus is complete by adapting the completeness proof for the stabilizer \ZX-calculus \cite{backens_zx-calculus_2013}.
There are several parts to the argument: First, we show that the results characterising all true equalities between stabilizer states from \cite{van_den_nest_graphical_2004}, which are central to the \ZX-calculus completeness proof, also hold in Spekkens' toy theory.
We then argue that it is sufficient to consider equalities between toy states rather than more general processes in the toy theory, because the toy theory has map-state duality.
Next, we prove that diagrams in the toy theory graphical calculus can be brought into a normal form called GS-LO form.
Finally, we show that the rewriting strategies used in the \ZX-calculus completeness proof also work in the toy theory graphical calculus.

Where the steps in the completeness argument for the toy theory differ only marginally from the corresponding steps in the stabilizer \ZX-calculus completeness proof, the proofs are left out or given in sketch form.
Longer proofs can be found in the appendices.

\subsection{The binary stabilizer formalism and the graph state theorems}
\label{s:binary}

Completeness of a graphical language means that any equality that can be derived in the standard formalism for the same theory can also be derived graphically.
Thus it is useful to have some simple way of characterising the equalities that can be derived in the underlying theory.

The completeness proof for the stabilizer \ZX-calculus makes use of two theorems about relationships between stabilizer states under local Clifford unitaries, i.e.\ unitary stabilizer operations that are tensor products of single-qubit unitaries.
Central to these results are graph states -- a class of stabilizer states whose entanglement structure is that of a finite simple graph, i.e.\ a graph with finitely many vertices (corresponding to the qubits), at most one edge between each pair of vertices (corresponding to the entanglement), and no self-loops.
Graph states on $n$ qubits can be represented in the binary formalism as follows \cite{van_den_nest_graphical_2004}:
\begin{equation}\label{eq:graph_state_matrix}
 S = \begin{pmatrix} \theta \\ I \end{pmatrix},
\end{equation}
where $\theta$ is a $n$ by $n$ symmetric matrix with zeroes along the diagonal and $I$ is the $n$ by $n$ identity matrix.
The matrix $\theta$ is in fact the adjacency matrix of the underlying graph: a 1 in position $(p,q)$ means that the $p$-th and $q$-th vertices are connected by an edge, a 0 means they are not connected.
The following results can be proved using the binary formalism:
\begin{theorem}[\cite{van_den_nest_graphical_2004}]\label{thm:van_den_nest1}
 Any stabilizer state can be transformed into some graph state by application of a local Clifford operation.
\end{theorem}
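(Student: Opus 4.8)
The plan is to argue entirely within the binary (check matrix) formalism set up above. Write the check matrix of an arbitrary $n$-qubit stabilizer state as $S=\begin{pmatrix} Z\\ X\end{pmatrix}$, where the top $n\times n$ block $Z$ records the $Z$-components and the bottom block $X$ the $X$-components of the $n$ generators. Validity of $S$ means two things: the columns are linearly independent (so $S$ has rank $n$), and the generators commute pairwise, which in binary form is the isotropy condition $Z^{T}X+X^{T}Z=0$ with respect to the symplectic pairing $\omega\bigl((z,x),(z',x')\bigr)=z\cdot x'+x\cdot z'$ over $\mathbb{F}_2$. Two kinds of operations are available: right multiplication by any $R\in GL(n,\mathbb{F}_2)$, which only re-chooses the generating set and hence fixes the state; and local Clifford unitaries, which act qubit-by-qubit as a block-diagonal symplectic row operation, each qubit $m$ contributing an element of $GL(2,\mathbb{F}_2)$ on the row pair $(m,m+n)$. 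In particular, the Hadamard on qubit $m$ swaps rows $m$ and $m+n$, and the phase gate on qubit $m$ adds its $X$-row $m+n$ to its $Z$-row $m$.

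Granting for the moment that local Cliffords can make the bottom block $X$ invertible, the graph-state form $\begin{pmatrix}\theta\\ I\end{pmatrix}$ of \eqref{eq:graph_state_matrix} follows in two short steps. Right-multiplying by $X^{-1}$ brings $S$ to $\begin{pmatrix} ZX^{-1}\\ I\end{pmatrix}$; setting $\theta=ZX^{-1}$ and using $Z^{T}X=X^{T}Z$ gives $\theta^{T}=(X^{T})^{-1}Z^{T}=ZX^{-1}=\theta$, so $\theta$ is symmetric. With the bottom block now equal to $I$, applying the phase gate to qubit $i$ adds the $i$-th row of $I$ to the $i$-th row of $\theta$, flipping exactly the diagonal entry $\theta_{ii}$ and leaving $I$ unchanged; doing this wherever $\theta_{ii}=1$ clears the whole diagonal. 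The result is a symmetric $\theta$ with zero diagonal, i.e.\ the adjacency matrix of a finite simple graph, so $S$ is in graph-state form.

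The whole theorem thus reduces to the claim that local Cliffords can render $X$ invertible, which I would establish by a Gauss--Jordan elimination on $S$ in which Hadamards relocate pivots from the top block to the bottom. Process the columns left to right, maintaining the invariant that the installed pivots occupy distinct bottom rows $n+p_{1},\dots,n+p_{j-1}$ and that each such pivot row has been reduced by column operations to a single $1$. For column $j$, first clear it against the earlier pivots; then look for a $1$ in a bottom row $n+m$ with $m\notin\{p_{1},\dots,p_{j-1}\}$ to serve as the next pivot. If the only nonzero entries lie in top rows, a Hadamard on the corresponding unused qubit moves a $1$ into the bottom block; since it touches only an unused row pair, the earlier pivots survive.

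I expect the crux to be ruling out the one stalling case: that after clearing, column $j$ is a nonzero pure-$Z$ vector $v=\begin{pmatrix} v_Z\\ 0\end{pmatrix}$ whose support lies entirely in the already-used qubits, so that no Hadamard can create a fresh bottom pivot. This is precisely where the isotropy condition is indispensable. Because every pivot row carries a single $1$, the $X$-part of the $k$-th pivot column, restricted to the pivot positions, is the unit vector $e_{p_{k}}$; hence $\omega(v,\mathrm{col}_{k})=v_{Z}\cdot e_{p_{k}}=v_{Z,p_{k}}$. Commutativity of the generators forces every such pairing to vanish, so $v_{Z,p_{k}}=0$ for all $k$, whence $v_{Z}=0$ and $v=0$, contradicting linear independence of the columns. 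The stalling case therefore cannot occur, the elimination terminates with $X$ of full rank, and the normalisation steps above finish the reduction to a graph state; everything outside this isotropy argument is routine $\mathbb{F}_2$ linear algebra.
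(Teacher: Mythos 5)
Your argument is correct, but note that the paper does not prove this theorem at all: it is quoted directly from \cite{van_den_nest_graphical_2004}, so there is no in-paper proof to compare against. What you have written is a sound, self-contained derivation in the binary formalism, which is exactly the formalism the paper says the result is proved in. The endgame once the $X$-block is invertible --- right-multiply by $X^{-1}$, read off symmetry of $\theta=ZX^{-1}$ from $Z^{T}X=X^{T}Z$, then clear the diagonal with phase gates, which only touch $\theta_{ii}$ because the bottom block is $I$ --- is the standard reduction. The one genuinely nontrivial step, making $X$ invertible by local Cliffords, is handled in the published argument by an existence claim (extend a maximal independent set of rows of $X$ by rows of $Z$ at the complementary positions, using symplectic self-orthogonality to show the extension spans), whereas you run an explicit column-by-column Gauss--Jordan elimination with Hadamards relocating pivots and invoke isotropy to kill the stalling case. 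The two uses of the commutation condition are the same in substance; your version has the advantage of being directly implementable and of exhibiting exactly which qubits receive a Hadamard. The only point worth tightening is the case analysis for column $j$: after clearing, the bottom entries at used positions vanish, so the three cases (a bottom $1$ at an unused position; bottom part zero but a top $1$ at an unused position; support confined to used top positions) are exhaustive, and the termination argument rests on saying so explicitly. Your isotropy computation $\omega(v,\mathrm{col}_k)=v_{Z,p_k}$, which forces $v=0$ in the last case and contradicts column independence, is correct.
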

\begin{theorem}[\cite{van_den_nest_graphical_2004}]\label{thm:van_den_nest2}
 Two stabilizer states are equivalent under local Clifford operations if and only if the underlying graphs are related by a sequence of local complementations.
\end{theorem}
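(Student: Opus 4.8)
The plan is to work entirely in the binary stabiliser formalism introduced above, where local Clifford operations appear as especially simple linear maps. By Theorem~\ref{thm:van_den_nest1} every stabiliser state is local-Clifford equivalent to a graph state, so it suffices to prove the claim for two graph states with adjacency matrices $\theta$ and $\theta'$, represented by the check matrices $S=\begin{pmatrix}\theta\\I\end{pmatrix}$ and $S'=\begin{pmatrix}\theta'\\I\end{pmatrix}$. Two check matrices describe the same stabiliser state precisely when they differ by right multiplication by an invertible matrix over $\ZZ_2$ (a change of generating set), while a local Clifford acts on the left by a block matrix assembled qubit-by-qubit from $SL(2,\ZZ_2)$. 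Separating the $X$-rows from the $Z$-rows, such an operation is recorded by four diagonal matrices $A,B,C,D$ over $\ZZ_2$ whose $i$-th entries form a determinant-one $2\times 2$ block.

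First I would derive the transformation law for adjacency matrices. Imposing $S' = Q\,S\,R$ for a local symplectic $Q=\begin{pmatrix}A&B\\C&D\end{pmatrix}$ and an invertible $R$, the bottom block forces $R=(C\theta+D)^{-1}$ and hence $\theta'=(A\theta+B)(C\theta+D)^{-1}$, valid whenever $C\theta+D$ is invertible. Thus two graph states are local-Clifford equivalent if and only if their adjacency matrices are related by such a binary M\"obius-type transformation with $\theta'$ again symmetric and zero along the diagonal. This reduces the entire theorem to an algebraic statement about these transformations over $\ZZ_2$.

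For the ``if'' direction I would exhibit local complementation as an instance of the transformation law. Local complementation at a vertex $v$ toggles every edge inside the neighbourhood $N(v)$; I would identify the diagonal matrices $A,B,C,D$ that realise this---concretely the local Clifford acting as a square-root-of-$X$ on $v$ together with an inverse square-root-of-$Z$ on each neighbour---and verify by direct binary computation that $(A\theta+B)(C\theta+D)^{-1}$ is exactly the locally complemented adjacency matrix. Iterating then shows that any sequence of local complementations produces a local-Clifford-equivalent graph state.

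The hard part is the ``only if'' direction: showing that every valid graph-to-graph transformation $\theta'=(A\theta+B)(C\theta+D)^{-1}$ decomposes into local complementations. Since $SL(2,\ZZ_2)\cong S_3$ is generated by few elements, each qubit-wise block can be written as a product of generators of a small number of types: those that merely interchange $X\leftrightarrow Z$ or add a phase (which, when applied uniformly so as to preserve graph-state form, act trivially on the graph) and those that implement a single local complementation. The obstacle is the bookkeeping: an arbitrary product of local blocks will in general pass through check matrices that are \emph{not} in graph form, and one must track how these intermediate matrices are corrected, showing that each correction can be absorbed into a further local complementation. Carrying out this case analysis over $\ZZ_2$---in particular proving that the graph-preserving stabiliser of a fixed $\theta$ is generated precisely by the local-complementation moves---is where the real work lies, and it is the step I expect to dominate the proof.
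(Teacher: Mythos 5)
The paper does not actually prove this statement: it is imported as a cited result from \cite{van_den_nest_graphical_2004} and used as a black box, later transferred to the toy theory via the identification of check matrices. So the relevant comparison is with the proof in the cited reference, and your outline reconstructs its strategy faithfully: the binary formalism, the fractional-linear action $\theta\mapsto(A\theta+B)(C\theta+D)^{-1}$ with diagonal blocks $A,B,C,D$, and the realisation of local complementation about $v$ by a square-root-of-$X$ on $v$ and square-roots-of-$Z$ on its neighbours are exactly the ingredients used there, and your derivation of the transformation law from $S'=QSR$ (reading off $R=(C\theta+D)^{-1}$ from the bottom block) is correct.

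The genuine gap is that the ``only if'' direction, which is the entire content of the theorem, is not carried out: you correctly identify that one must show every graph-form-preserving transformation of this type decomposes into local complementations, but that claim \emph{is} the theorem, so flagging it as ``where the real work lies'' does not discharge it. The missing argument in \cite{van_den_nest_graphical_2004} is a descent on the number of nonzero entries of the diagonal matrix $C$: using the entrywise determinant condition $A_{jj}D_{jj}+B_{jj}C_{jj}=1$ together with invertibility of $C\theta+D$, one shows that whenever $C\neq 0$ there is a vertex in the support of $C$ (or a neighbour of one) about which a local complementation strictly reduces that support; in the base case $C=0$, a diagonal invertible matrix over $\ZZ_2$ forces $D=I$, hence $A=I$, hence $\theta'=\theta+B$ with $B$ diagonal, and the zero-diagonal constraint forces $B=0$ and $\theta'=\theta$. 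Without proving that a suitable vertex always exists --- precisely the case analysis you defer --- the proof is incomplete. Note also that your closing formulation (``the graph-preserving stabiliser of a fixed $\theta$ is generated by local complementations'') is not quite the right target: one must decompose transformations between two \emph{different} graphs, not only the stabiliser of one, and the descent argument above is what handles that.
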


By local complementation we mean the following operation on a graph.

\begin{definition}
 Let $G=(V,E)$ be a graph with set of vertices $V$ and set of edges $E$. The \emph{local complementation about the vertex $v$} is the operation that inverts the subgraph generated by the neighbourhood of $v$ (but not including $v$ itself). Formally, a local complementation about $v\in V$ sends $G$ to the graph:
 \begin{equation}\label{eq:graph_local_complementation}
  G\star v = \left(V,E \triangle \big\{\{b,c\}\big|\{b,v\},\{c,v\}\in E\wedge b\neq c\big\}\right),
 \end{equation}
 where $\triangle$ denotes the symmetric set difference, i.e.\ $A\triangle B$ contains all elements that are contained either in $A$ or in $B$ but not in both.
\end{definition}

We now show that these results translate to the toy theory.

As described in section \ref{s:spekkens}, a state of maximal knowledge on $n$ toy bits is given by a set of $n$ commuting quadrature variables, together with the values for each of the variables.
These quadrature variables can be represented as binary vectors, similar to the representation of Pauli products, where the $m$-th and $(m+n)$-th component together encode the quadrature variable acting on the $m$-th toy bit according to the following encoding:
\begin{align}
 X &\mapsto 01, \\
 Z &\mapsto 10, \text{ and} \\
 X\oplus Z &\mapsto 11,
\end{align}
with 00 indicating that no quadrature variable is acting on the given toy bit.
Thus, ignoring the values of the quadrature variables, any state of maximal knowledge can be described by a binary $2n$ by $n$ matrix in the same way as a pure quantum state.

\begin{lemma}
 A binary $2n$ by $n$ matrix $S$ represents a valid state in the toy theory if and only if $S^T J S=0$, where:
 \begin{equation}
  J = \begin{pmatrix}0&I\\I&0\end{pmatrix}
 \end{equation}
 with $I$ the $n$ by $n$ identity matrix.
\end{lemma}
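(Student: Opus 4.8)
The plan is to recognise the condition $S^T J S=0$ as nothing but the statement that the $n$ quadrature variables encoded by the columns of $S$ pairwise commute, and then to invoke the fact recalled above that a state of maximal knowledge on $n$ toy bits is exactly a maximal commuting set of such variables. First I would fix the block ordering of the encoding: I write the $2n$-vector of a quadrature variable as $(b_1,\dots,b_n,a_1,\dots,a_n)$, where $a_i$ and $b_i$ are the coefficients of $X_i$ and $Z_i$, so that the assignments $X\mapsto 01$, $Z\mapsto 10$, and $X\oplus Z\mapsto 11$ given above are reproduced.

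The central step is to compute the toy-theory commutator of two general quadrature variables in terms of their binary encodings. Writing $u=\sum_i(a_iX_i\oplus b_iZ_i)$ and $v=\sum_j(c_jX_j\oplus d_jZ_j)$ and expanding $[u,v]$ by bilinearity, using the base relations $[X_i,X_j]=0=[Z_i,Z_j]$ and $[X_i,Z_j]=\delta_{ij}=[Z_i,X_j]$, I would obtain
\[
 [u,v]=\sum_i\left(a_id_i + b_ic_i\right)\pmod 2 .
\]
A direct expansion of $u^T J v$ with $J=\left(\begin{smallmatrix}0&I\\I&0\end{smallmatrix}\right)$ gives the same expression, so $[u,v]=u^T J v$ over $\ZZ_2$. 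This identification of the commutator with the symplectic form is the key observation; the only care needed here is the bookkeeping of the two blocks so that the form comes out as $J$.

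With this in hand the equivalence is almost immediate in both directions. The $(p,q)$-entry of $S^T J S$ is $s_p^T J s_q$, i.e.\ the commutator $[s_p,s_q]$ of the $p$-th and $q$-th columns of $S$, while the diagonal entries $s_p^T J s_p$ vanish identically over $\ZZ_2$, matching $[X,X]=[Z,Z]=0$. Hence $S^T J S=0$ holds exactly when every pair of columns commutes. For the forward direction, a valid state is given by commuting variables, so $S^T J S=0$; for the converse, if $S^T J S=0$ then the columns pairwise commute, and since they are the $n$ generators of the state they are linearly independent, so they span an $n$-dimensional isotropic subspace of the symplectic space $\ZZ_2^{2n}$, which is automatically a maximal commuting set and hence a valid state of maximal knowledge.

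The obstacle, such as it is, is conceptual rather than computational: one must check that isotropy together with the standing assumption of $n$ independent columns is genuinely equivalent to \emph{maximality} of the commuting set. This rests on the nondegeneracy of $J$ over $\ZZ_2$, which forces every maximal isotropic subspace to have dimension exactly $n$, so that an $n$-dimensional isotropic subspace cannot be properly extended. Once the encoding is pinned down and this dimension count is noted, the commutator computation and the assembly into $S^T J S$ are routine.
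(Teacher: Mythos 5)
Your argument is correct, but it is worth pointing out that the paper does not actually prove this lemma itself: immediately after the statement it writes only that the claim ``follows from the principle of classical complementarity, as shown in'' the quasi-quantization paper of Spekkens, delegating the whole argument to that reference. What you have written is essentially the content of that citation made self-contained, and the pieces all check out against the paper's conventions: your block ordering $(b_1,\dots,b_n,a_1,\dots,a_n)$ with the $Z$-coefficients first reproduces the encoding $X\mapsto 01$, $Z\mapsto 10$ with the $m$-th and $(m+n)$-th components pairing up; bilinearity of $[\cdot,\cdot]$ together with $[X_i,Z_j]=[Z_i,X_j]=\delta_{ij}$ and $[X_i,X_j]=[Z_i,Z_j]=0$ gives $[u,v]=\sum_i(a_id_i+b_ic_i)=u^TJv$ over $\ZZ_2$; the diagonal entries of $S^TJS$ vanish identically; and the dimension count for isotropic subspaces of the nondegenerate alternating form $J$ on $\ZZ_2^{2n}$ is exactly what upgrades ``pairwise commuting'' to ``maximal commuting set'', which is what the paper's notion of a state of maximal knowledge demands. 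The one point you are right to flag explicitly is the standing assumption that the $n$ columns of $S$ are linearly independent generators --- without it the ``if'' direction fails (e.g.\ $S=0$) --- a convention the paper inherits silently from the check-matrix formalism. In short, your route supplies a complete verification where the paper offers only a pointer; the cost is a page of routine but genuinely necessary bookkeeping, and the benefit is that the reader need not consult the external reference to see why the symplectic condition characterises validity.
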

This follows from the principle of classical complementarity, as shown in \cite{spekkens_quasi-quantization_2014}.
In the binary picture, the valid reversible transformations of the toy theory are represented by $2n$ by $2n$ binary matrices $Q$ satisfying $Q^T J Q = J$.

The conditions for $2n$ by $n$ binary matrices to represent valid states and the condition for $2n$ by $2n$ binary matrices to represent valid transformations are exactly the same as in the binary formalism for stabilizer quantum mechanics.
Therefore the binary matrix formalism for Spekkens' toy bit theory is exactly the same as the check matrix formalism for stabilizer quantum mechanics, if one ignores the values of the quadrature variables in the former and the eigenvalues in the latter.
An equivalent result was shown in \cite{pusey_stabilizer_2012}, albeit not using check matrices.

When considering graph states and local Clifford transformations in quantum mechanics, it is reasonable to ignore the eigenvalues in the stabilizer formalism because the eigenvalue for each stabilizer of a graph state can be changed by a local Clifford transformation that keeps all the other properties of the state invariant.
Define graph states in Spekkens' toy bit theory to be states having the same check matrix representation as some graph state in stabilizer quantum mechanics.
Then the values of the quadrature variables can be ignored for the same reason as eigenvalues in quantum theory.

Thus theorems \ref{thm:van_den_nest1} and \ref{thm:van_den_nest2} carry over to the toy theory, i.e.\ we have:

\begin{theorem}\label{thm:vdN1_toy}
 Any toy stabilizer state is equivalent to some toy graph state under local toy transformations $\sigma\in (S_4)^n$.
\end{theorem}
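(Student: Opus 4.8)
The plan is to reduce the claim to Theorem~\ref{thm:van_den_nest1} by exploiting the equivalence of binary formalisms established above. First I would translate the given toy stabilizer state into its binary $2n$ by $n$ matrix $S$. By the preceding discussion this is literally the same object as the check matrix of some pure stabilizer state in quantum theory, once one discards eigenvalues on the quantum side and the values of the quadrature variables on the toy side; indeed $S^T J S = 0$ characterises valid states identically in both settings, so every toy state matrix is also a valid quantum check matrix and conversely.

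Next I would invoke Theorem~\ref{thm:van_den_nest1} to obtain a local Clifford operation whose action on $S$, ignoring eigenvalues, produces a graph-state check matrix of the form~\eqref{eq:graph_state_matrix}. The crucial point is how local operations act on binary matrices. A local Clifford acts on a check matrix, per qubit and modulo eigenvalues, as an element of the symplectic group $Sp(2,\ZZ_2)$. On the toy side, the single-toy-bit transformations form $S_4$ (the $24$ permutations of the ontic states), and, modulo the values of the quadrature variables, they act on each pair of rows of the binary matrix through the very same symplectic quotient. Since the single-qubit Clifford group modulo global phase is itself isomorphic to $S_4$, the local Clifford supplied by Theorem~\ref{thm:van_den_nest1} lifts to a local toy transformation $\sigma\in(S_4)^n$ inducing the identical transformation of the binary matrix.

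Applying $\sigma$ to the original toy state then yields a state whose binary matrix is of the form~\eqref{eq:graph_state_matrix}; by the definition of toy graph states as those sharing a check-matrix representation with a quantum graph state, this is a toy graph state, as required.

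I expect the main obstacle to lie in the second step: justifying that discarding the values of the quadrature variables is harmless, so that merely matching the check matrix suffices to guarantee equivalence under a genuine local toy transformation, and confirming that the lift of a local Clifford is a valid element of $(S_4)^n$ rather than an arbitrary linear map. Both points rest on facts already in hand: the value data can be set freely by value-changing transformations that leave the symplectic action untouched (for the same reason the eigenvalues of graph-state stabilizers are locally adjustable in quantum theory), and $\HadSpek{}$ together with the phase shifts generate all $24$ reversible single-toy-bit transformations, so every per-qubit symplectic map indeed arises from some genuine toy transformation.
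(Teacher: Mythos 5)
Your proposal is correct and follows essentially the same route as the paper: both arguments rest on the observation that the binary matrix formalism for toy states and transformations (the conditions $S^TJS=0$ and $Q^TJQ=J$, with quadrature values ignored) coincides with the check matrix formalism for stabilizer quantum mechanics (with eigenvalues ignored), so that Theorem~\ref{thm:van_den_nest1} carries over directly. You merely spell out in more detail the lifting step --- that every per-toy-bit symplectic action is realised by some element of $S_4$ and that the discarded value data is locally adjustable --- which the paper leaves implicit.
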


\begin{theorem}\label{thm:vdN2_toy}
 Two toy graph states on the same number of toy bits are equivalent under local toy transformations if and only if there is a sequence of local complementations that transform one graph into other.
\end{theorem}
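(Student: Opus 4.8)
The plan is to transport the proof of Theorem~\ref{thm:van_den_nest2} from stabilizer quantum mechanics into the toy theory by working entirely in the binary picture. As established earlier in this section, the binary matrix formalism for the toy theory coincides with the check matrix formalism for stabilizer quantum mechanics once the values of the quadrature variables (respectively the eigenvalues) are ignored, and toy graph states are defined to have exactly the graph-state check matrix \eqref{eq:graph_state_matrix}. Local complementation \eqref{eq:graph_local_complementation} is a purely combinatorial operation on the adjacency matrix $\theta$, so it is literally the same operation in both theories. Hence the only new ingredient needed is to verify that the group of \emph{local toy transformations} $\sigma\in(S_4)^n$ acts on the binary check matrices in exactly the same way that local Clifford operations do in the stabilizer setting; given this, both directions of the equivalence follow by transcribing the binary argument of \cite{van_den_nest_graphical_2004}.

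First I would pin down the binary action of a single-toy-bit transformation. Reversible toy transformations correspond to binary matrices $Q$ with $Q^TJQ=J$, which for one toy bit is the symplectic group $\mathrm{Sp}(2,\mathbb{F}_2)$ of order $6$. The passage from $S_4$ to this action factors through the value-changing operations: writing the ontic states as $(X,Z)\in\{0,1\}^2$, the flips $(X,Z)\mapsto(X,Z\oplus1)$, $(X\oplus1,Z)$, $(X\oplus1,Z\oplus1)$ are exactly the permutations $(12)(34),(13)(24),(14)(23)$, so the value-changing subgroup is the Klein four-group $V_4\trianglelefteq S_4$ (the toy analogue of the single-qubit Pauli group). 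Since $S_4/V_4\cong S_3\cong\mathrm{Sp}(2,\mathbb{F}_2)$, the single-toy-bit transformations surject onto the full binary symplectic group, matching the fact that the order-$24$ single-qubit Clifford group surjects onto $\mathrm{Sp}(2,\mathbb{F}_2)$ with the Pauli group as kernel once eigenvalues are disregarded. Taking $n$-fold products, $(S_4)^n$ and the local Cliffords induce the same block-diagonal subgroup of $\mathrm{Sp}(2n,\mathbb{F}_2)$ on the check matrices.

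With this correspondence established, both directions reduce to the binary computation of \cite{van_den_nest_graphical_2004}. For the ``if'' direction I would exhibit, for each vertex $v$, the block-diagonal symplectic matrix implementing local complementation about $v$, verify that it sends the check matrix of $G$ to that of $G\star v$, and observe that by the previous paragraph this matrix is realised by a genuine element of $(S_4)^n$. For the ``only if'' direction I would run the binary analysis showing that any block-diagonal symplectic matrix relating two graph-state check matrices forces their adjacency matrices to differ by a sequence of local complementations; since the local operation groups coincide, the identical computation applies. Throughout, the values of the quadrature variables may be disregarded because local transformations can adjust them freely, exactly as for eigenvalues in the quantum case (the remark preceding Theorem~\ref{thm:vdN1_toy}).

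The main obstacle I anticipate is the group-theoretic identification in the second paragraph: one must check that $V_4$ is precisely the kernel of the binary action, so that $S_4/V_4$ maps isomorphically onto $\mathrm{Sp}(2,\mathbb{F}_2)$ and every symplectic shear used to implement a local complementation is in fact realised by a toy permutation. Once this isomorphism is secured, the remainder is a faithful transcription of the stabilizer proof, since every binary manipulation there has an identically-typed counterpart in the toy theory.
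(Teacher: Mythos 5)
Your proposal is correct and follows essentially the same route as the paper: the paper likewise obtains Theorem \ref{thm:vdN2_toy} by observing that the binary check-matrix formalism for the toy theory coincides with that for stabilizer quantum mechanics once the quadrature values (respectively eigenvalues) are ignored, and then letting the result of \cite{van_den_nest_graphical_2004} carry over directly. You supply more detail than the paper does---in particular the explicit identification of the local toy transformations with the local Cliffords modulo Paulis via $S_4/V_4\cong\mathrm{Sp}(2,\mathbb{F}_2)$---but this only fills in a step the paper leaves implicit.
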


Local complementations on a graph are defined just as above.

\subsection{Map-state duality for the toy theory}

The graph state theorems, as implied by the name, apply only to toy states, not more general processes in the toy theory.
Yet it suffices to consider only equalities between states in order to get a completeness result for the entire theory.
This is because the toy theory exhibits map-state duality: as in quantum theory, there exists a bijection between operators from $n$ to $m$ toy bits and states on $(n+m)$ toy bits.
This duality is also known as the Choi-Jamio{\l}kowski isomorphism.
Denoting the map by $A$ and its corresponding state by $B$, the isomorphism can be represented diagrammatically as follows:
\begin{equation}\label{eq:Choi-Jamiolkowski}
 \input{tikz_files/Choi-Jamiolkowski1.tikz} \quad\Longleftrightarrow\quad \input{tikz_files/Choi-Jamiolkowski2.tikz}
\end{equation}
This result follows directly from the snake equations given in Fig. \ref{fig:compact_structure}, which hold in the toy theory graphical calculus via the topology rule.

The Choi-Jamio{\l}kowski isomorphism allows toy theory operators to be turned into states.
Any equalities derived between these states then apply also to the original operators.
Thus, a completeness result for the entire toy theory can be derived by considering only toy states.

\subsection{Graph states and related diagrams in the toy theory graphical calculus}
\label{s:toy_diagrams}

In the first part of this section, we defined graph states for the toy theory via their check matrices.
We now show that they also have an elegant graphical representation.

\begin{definition}\label{dfn:graph_state}
 Let $G$ be a finite simple undirected graph, i.e.\ a graph with finitely many vertices, at most one edge between any pair of vertices, and no self-loops.
 Let the set of vertices be $V$ and the set of edges $E$.
 The associated graph state in the toy theory graphical calculus comprises the following:
 \begin{mitem}
  \item for each vertex in $V$, a green node with one output, and
  \item for each edge in $E$, a copy of \HadSpek{} connected to the green nodes representing the vertices at either end of the edge.
 \end{mitem}
\end{definition}

To show that this definition is equivalent to the previous one, we consider the operators stabilizing the graph state.

\begin{lemma}
\label{lem:fixpoint}
 Let
 \begin{tikzpicture}[baseline=-0.1 cm]
	\begin{pgfonlayer}{nodelayer}
		\node [style=none] (0) at (-0.75, 0.3) {};
		\node [style=none] (1) at (0.75, 0.3) {};
		\node [style=none] (2) at (-0.75, 0) {};
		\node [ellipse, fill=White, draw=Black, minimum width=2 cm, style=none] (3) at (0, -0) {$G$};
		\node [style=none] (4) at (0.75, 0) {};
		\node [style=none] (5) at (0, 0.27) {$\ldots$};
	\end{pgfonlayer}
	\begin{pgfonlayer}{edgelayer}
		\draw (0.center) to (2.center);
		\draw (1.center) to (4.center);
	\end{pgfonlayer}
 \end{tikzpicture}
 denote the state associated with a graph $G$. Then for any vertex $v\in V$,
 \begin{center}
  \input{tikz_files/eigenstate.tikz}
 \end{center}
 where $\alpha_k=11$ if $\{v,k\}\in E$ and $\alpha_k=00$ otherwise.
 This means that
 \begin{tikzpicture}[baseline=-0.1 cm]
	\begin{pgfonlayer}{nodelayer}
		\node [style=none] (0) at (-0.75, 0.3) {};
		\node [style=none] (1) at (0.75, 0.3) {};
		\node [style=none] (2) at (-0.75, 0) {};
		\node [ellipse, fill=White, draw=Black, minimum width=2 cm, style=none] (3) at (0, -0) {$G$};
		\node [style=none] (4) at (0.75, 0) {};
		\node [style=none] (5) at (0, 0.27) {$\ldots$};
	\end{pgfonlayer}
	\begin{pgfonlayer}{edgelayer}
		\draw (0.center) to (2.center);
		\draw (1.center) to (4.center);
	\end{pgfonlayer}
 \end{tikzpicture}
 is an eigenstate of any operator that applies \phase{rn,label={[rphase]right:$11$}} to one of the vertices and \phase{gn,label={[gphase]right:$11$}} to all neighbours of that vertex.
\end{lemma}

This lemma follows from the rules of the red-green calculus for the toy theory; the proof is entirely analogous to that for the \ZX-calculus, where $\pi$-phase shifts appear instead of the 11-phase shifts \cite{duncan_graph_2009}.

\begin{corollary}
 Definition \ref{dfn:graph_state} is equivalent to the definition of toy theory graph states via their check matrices.
\end{corollary}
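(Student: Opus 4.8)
The plan is to read the check matrix of the graphical graph state directly off its stabilisers, which are supplied by Lemma~\ref{lem:fixpoint}, and then appeal to the bijection between check matrices and toy states of maximal knowledge. Recall that a state of maximal knowledge on $n$ toy bits is determined by a maximal set of $n$ commuting quadrature variables stabilising it, and that (ignoring the values of the variables, as justified in section~\ref{s:binary}) this set is recorded precisely by the columns of a $2n$ by $n$ binary check matrix. Hence it suffices to show that the graphical graph state of Definition~\ref{dfn:graph_state} is stabilised by a set of quadrature variables whose check matrix is exactly the $\begin{pmatrix}\theta\\I\end{pmatrix}$ of \eqref{eq:graph_state_matrix}, where $\theta$ is the adjacency matrix of $G$.

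First I would invoke Lemma~\ref{lem:fixpoint}: for each vertex $v\in V$ the graphical graph state is a fixed point of the operator $K_v$ that applies \phase{rn,label={[rphase]right:$11$}} to the toy bit at $v$ and \phase{gn,label={[gphase]right:$11$}} to the toy bits at every neighbour of $v$. Under the encoding of quadrature variables fixed in section~\ref{s:binary} (with $X\mapsto 01$ placing a $1$ in the lower block and $Z\mapsto 10$ placing a $1$ in the upper block), the red $11$-phase corresponds to the quadrature variable $X$ and the green $11$-phase to $Z$. Thus the binary column encoding $K_v$ has a single $1$ in position $v$ of its lower, $X$-type block, and a $1$ in position $w$ of its upper, $Z$-type block for exactly those $w$ with $\{v,w\}\in E$; that is, the column equals $\begin{pmatrix}\theta e_v\\ e_v\end{pmatrix}$, with $\theta e_v$ the $v$-th column of the adjacency matrix.

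Collecting the columns $K_1,\dots,K_n$ then yields precisely $S=\begin{pmatrix}\theta\\I\end{pmatrix}$. To conclude that this matrix determines the same state as the graphical construction, I would check that the $K_v$ form a legitimate maximal stabilising set: their linear independence is immediate since the lower block of $S$ is the identity, and the symplectic computation $S^TJS=\theta+\theta^T=0$ modulo $2$ (using that $\theta$ is symmetric) shows that the variables pairwise commute, so the set is valid and, being of size $n$, maximal. By the bijection between check matrices and maximal-knowledge toy states established in section~\ref{s:binary}, the graphical graph state and the check-matrix graph state therefore coincide.

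The main obstacle I anticipate is pinning down the dictionary between the colours of the $11$-phase shifts and the quadrature variables $X$ and $Z$, since the relational ``phases'' act as permutations of ontic states rather than as diagonal operators, and this dictionary must be matched against the encoding conventions of section~\ref{s:binary} so that the assembled matrix is exactly $\begin{pmatrix}\theta\\I\end{pmatrix}$ and not some row-permuted variant. Once this correspondence is fixed, the remaining steps are the routine symplectic bookkeeping sketched above.
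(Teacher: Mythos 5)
Your proposal is correct and follows essentially the same route as the paper: invoke Lemma \ref{lem:fixpoint} to obtain the $n$ stabilising quadrature variables (red $11$-phase $\leftrightarrow X$, green $11$-phase $\leftrightarrow Z$), encode them as binary columns, and assemble the check matrix $\begin{pmatrix}\theta\\I\end{pmatrix}$. Your extra verification that the columns form a valid maximal commuting set via $S^TJS=\theta+\theta^T=0$ is a welcome bit of rigour that the paper leaves implicit, but it does not change the argument.
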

\begin{proof}
 Lemma \ref{lem:fixpoint} gives $n$ quadrature variables of which the diagram is an eigenstate: a phase shift \phase{rn, label={[rphase]right:$11$}} on the $p$-th output corresponds to a term $X_p$ in the quadrature variable, a phase shift \phase{gn, label={[gphase]right:$11$}} on the $q$-th output corresponds to a term $Z_q$.
 Translate each of these variables into a binary vector as described in section \ref{s:binary}, and assemble the resulting vectors as the columns of a matrix with the vector for the variable involving the term $X_m$ as the $m$-th column.
 The resulting check matrix then has the form given in \eqref{eq:graph_state_matrix}.
\end{proof}

The local complementation operations from theorem \ref{thm:vdN2_toy} can be derived from the rules of the toy theory graphical calculus.
Note that from now on we use the term ``local complementation'' in a slightly different sense as before: previously, the term referred to an operation on graphs only.
From now on, we use the same term to refer to an operation on graph states together with the application of a local operation to all the toy bits that keeps the overall toy state invariant. 

\begin{lemma}
\label{lem:local_complementation}
 The following local complementation rule holds in the red-green calculus for the toy theory:
 \begin{center}
  \input{tikz_files/local_complementation_Spek1.tikz}
 \end{center}
 where $\alpha_k=01$ if $\{v,k\}\in E$ and $\alpha_k=00$ otherwise, and $G\star v$ denotes the graph-theoretical local complementation as defined in \eqref{eq:graph_local_complementation}.
\end{lemma}

A sketch proof of this lemma can be found in appendix \ref{s:appendix_graph_states}.

A \emph{local complementation along an edge} $\{v,w\}\in E$ consists of a local complementation about $v$, a local complementation about $w$, and another local complementation about $v$, yielding:
 \begin{center}
  \begin{tikzpicture}[baseline=-0.1cm]
	\begin{pgfonlayer}{nodelayer}
		\node [style=none] (0) at (1, 0.5) {};
		\node [style=none] (1) at (1, -0.5) {};
		\node [ellipse, fill=White, draw=Black, minimum height=0.5 cm, minimum width=3 cm, style=none] (2) at (2, -0.4) {$G'$};
		\node [style=none] (3) at (0, -0) {$=$};
		\node [style=none] (4) at (3, -0.5) {};
		\node [style=none] (5) at (3, 0.5) {};
		\node [ellipse, fill=White, draw=Black, minimum height=0.5 cm, minimum width=3 cm, style=none] (6) at (-2, -0.4) {$G$};
		\node [style=none] (7) at (-3, 0.5) {};
		\node [style=none] (8) at (-3, -0.5) {};
		\node [style=none] (9) at (-1, 0.5) {};
		\node [style=none] (10) at (-1, -0.5) {};
		\node [style=none] (11) at (-2, 0.2) {$\ldots$};
		\node [rectangle,fill=white,draw=black,minimum height=14pt,minimum width=14pt,inner sep=0pt] (12) at (-3, 0.2) {$\sigma_1$};
		\node [rectangle,fill=white,draw=black,minimum height=14pt,minimum width=14pt,inner sep=0pt] (13) at (-1, 0.2) {$\sigma_n$};
		\node [rectangle,fill=white,draw=black,minimum height=14pt,minimum width=14pt,inner sep=0pt] (14) at (1, 0.2) {$\sigma_1'$};
		\node [rectangle,fill=white,draw=black,minimum height=14pt,minimum width=14pt,inner sep=0pt] (15) at (3, 0.2) {$\sigma_n'$};
		\node [style=none] (16) at (2, 0.2) {$\ldots$};
	\end{pgfonlayer}
	\begin{pgfonlayer}{edgelayer}
		\draw (0.center) to (1.center);
		\draw (5.center) to (4.center);
		\draw (7.center) to (8.center);
		\draw (9.center) to (10.center);
	\end{pgfonlayer}
  \end{tikzpicture}
 \end{center}
where:
 \[
  \sigma_j' = \begin{cases} \sigma_j \circ (23) & \text{if }j\in\{v,w\} \\ \sigma_j & \text{otherwise} \end{cases}
 \]
and $G' = (V,E')$ satisfies the following properties:
 \begin{itemize}
  \item $G' = ((G\star v)\star w)\star v = ((G\star w)\star v)\star w$;
  \item $\{v,w\}\in E'$;
  \item for $j\in V\setminus\{v,w\}$, $\{j,v\}\in E'\Leftrightarrow \{j,w\}\in E$ and $\{j,w\}\in E'\Leftrightarrow \{j,v\}\in E$, i.e.\ a vertex $j$ is adjacent to $v$ in $G'$ if and only if $j$ was adjacent to $w$ in $G$ and correspondingly with $v$ and $w$ exchanged;
  \item for $p,q\in V\setminus\{v,w\}$, let $P$ be the intersection of $p$'s neighbourhood with $\{v,w\}$, i.e.\ $v\in P$ if $\{p,v\}\in E$ and $w\in P$ if $\{p,w\}\in E$, and define $Q$ correspondingly. Then the edge $\{p,q\}$ is toggled if and only if $P,Q$ and $\emptyset$ are pairwise distinct.
 \end{itemize}
This operation is symmetric under interchange of the two vertices $v$ and $w$.

It will be useful to have a normal form for reversible single-toy bit operators.
\begin{lemma}\label{lem:single_toy_bit_operator}
 Any reversible single-toy bit operator, i.e.\ any diagram or subdiagram consisting solely of phase shifts and \HadSpek, can be written uniquely in one of the following forms:
 \begin{center}
  \input{tikz_files/single_toybit.tikz}
 \end{center}
 where $a,b,c,d,e,f,g\in\{0,1\}$ and $\bar{e}=e\oplus 1$, with $\oplus$ denoting addition modulo 2.
\end{lemma}
This is straightforward to check, analogously to the corresponding result in the \ZX-calculus.
In the following, whenever we talk about reversible single-toy bit operators we will assume that they are normalised as in the above lemma.

\begin{definition}
 A diagram in the red-green calculus for Spekkens' toy theory is called a \emph{GS-LO diagram} (graph state with local operators) if it consists of a graph state as in definition \ref{dfn:graph_state} with reversible single-toy bit operators on each output.
\end{definition}

GS-LO diagrams play a central role in the graphical calculus for the toy theory, as shown by the following theorem.

\begin{theorem}\label{thm:GS_LO}
 Any state diagram in the red-green calculus for Spekkens's toy theory is equal to some GS-LO diagram according to the rewrite rules.
\end{theorem}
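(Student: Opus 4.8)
The plan is to reduce an arbitrary state diagram to a GS-LO diagram by systematically eliminating the red nodes and internal structure until only a graph state with single-toy-bit local operators on the outputs remains. The strategy mirrors the completeness argument for the stabilizer \ZX-calculus, exploiting the fact that all the structural rewrite rules (spider, identity, bialgebra, copy, colour change, Euler decomposition) carry over unchanged. First I would use the spider rule together with the colour change rule to put the diagram into a bipartite-like form in which every node is green and any \HadSpek{} appears only on edges between green nodes or on outputs. The colour change rule lets me recolour red spiders to green at the cost of introducing \HadSpek{} on their legs, and the Euler decomposition rule lets me rewrite each surviving \HadSpek{} into green and red phase shifts when convenient; the spider rule then fuses adjacent same-colour nodes.

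The key steps, in order, are as follows. \emph{Step 1:} Normalise so that the diagram consists entirely of green spiders connected by plain wires and by copies of \HadSpek{}, with reversible single-toy-bit operators pushed toward the outputs. \emph{Step 2:} Use the bialgebra rule and copy rule to remove multiplicities and simplify the connectivity, and use the spider rule to merge green nodes joined by a bare wire, so that between any two internal green nodes there is at most one \HadSpek{} edge and no green node has a self-loop (the loop rule removes any such self-loop). \emph{Step 3:} Eliminate internal green nodes — those not directly attached to an output — by the same pivoting/local-complementation manoeuvres used in the \ZX{} case, invoking Lemma~\ref{lem:local_complementation} and the local complementation along an edge derived from it. Each internal vertex is removed at the cost of toggling edges among its neighbours and updating the local operators, which is exactly what the local complementation rule encodes. \emph{Step 4:} Once every remaining green node corresponds to an output, collect the accumulated phase shifts and \HadSpek{} copies on each output leg and normalise them into one of the two canonical forms of Lemma~\ref{lem:single_toy_bit_operator}. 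What remains is precisely a graph state as in Definition~\ref{dfn:graph_state} with a reversible single-toy-bit operator on each output, i.e.\ a GS-LO diagram.

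Throughout, I would lean on Theorem~\ref{thm:vdN1_toy} and Theorem~\ref{thm:vdN2_toy} only insofar as they guarantee that the graph-state target exists and that local complementation suffices to navigate between equivalent graph states; the actual rewriting is carried out purely diagrammatically using the rules established in this section. The main obstacle I expect is \emph{Step 3}: showing that the internal-vertex elimination terminates and that the local complementation rule of Lemma~\ref{lem:local_complementation} can always be applied to reduce the number of internal nodes without creating new ones. This requires a careful choice of elimination order and an inductive measure (e.g.\ the number of internal green nodes, possibly refined by the number of \HadSpek{} edges) that strictly decreases under each pivot. Because the phase group here is $\ZZ_2\times\ZZ_2$ rather than $\ZZ_4$, I must also verify that every phase-dependent step used in the \ZX{} argument has a valid analogue under the $11$-copy, $11$-commutation, and copy rules; since these rules differ from their \ZX{} counterparts only in the phase bookkeeping, the structural termination argument should go through unchanged, but the details of how phases redistribute during pivoting are where the proof must be checked most carefully.
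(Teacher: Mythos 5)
Your strategy is genuinely different from the one the paper uses. The paper follows the inductive construction of Theorem 7 in the stabilizer \ZX-calculus completeness proof: every state diagram decomposes into reversible single-toy-bit operators and the four basic spiders \joinnode{gn}, \splitnode{gn}, \state{gn}, and \effect{gn}; since \state{gn} is itself a GS-LO diagram and is the only generator with no inputs, it suffices to show that composing a GS-LO diagram with one further basic spider or local operator yields something rewritable back into GS-LO form, using local complementations and fixpoint operations to bring the local operator on the affected toy bit into a convenient form first. Your proposal instead normalises the entire diagram into a ``graph-like'' form in one pass and then eliminates internal vertices by local complementation and pivoting. Both routes are viable in principle; the paper's trades a global simplification procedure for a case analysis over four generators and thereby never has to reason about internal vertices at all.

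As written, however, your Step 3 has a genuine gap. Lemma \ref{lem:local_complementation} only describes the effect of a local complementation on a closed graph state --- it relates the state of $G$ to the state of $G\star v$ decorated with extra phases --- and is not a vertex-elimination rule. To delete an internal green node you need the stronger, open-diagram statements that an internal vertex carrying phase $01$ or $10$ can be removed by complementing about it and fusing, that an adjacent pair of internal vertices with phases in $\{00,11\}$ can be removed by a pivot, and a treatment of the residual case of an internal vertex with phase $00$ or $11$ whose neighbours are all boundary vertices, where the pivot must be performed along an edge into the boundary. None of this is supplied by the lemmas available in the paper, and your proposed termination measure (the number of internal vertices) does not by itself establish that every internal vertex falls into one of the removable cases. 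Finally, Theorems \ref{thm:vdN1_toy} and \ref{thm:vdN2_toy} are semantic statements about the toy theory and cannot be invoked inside a purely syntactic rewriting argument, even ``only insofar as they guarantee that the graph-state target exists''; the existence of the target must emerge from the rewriting itself, as it does in the paper's induction.
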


This theorem is analogous to theorem 7 in \cite{backens_zx-calculus_2013}. A sketch of the parts of the proof that differ from the \ZX-calculus case can be found in appendix \ref{s:appendix_graph_states}.
The proof relies on the fact that diagrams can be decomposed into reversible single-toy bit operators and four basic spiders:
\[
 \joinnode{gn},\quad \splitnode{gn},\quad \state{gn},\quad \text{and}\quad \effect{gn}.
\]
Local complementations together with fixpoint operations can be used to change the reversible single-toy bit operator on any given toy bit in the graph state to any desired operator, similar to the equivalent result for the \ZX-calculus \cite{backens_zx-calculus_2013}.
Thus it can be shown that any diagram consisting of a basic spider composed with a GS-LO diagram can be rewritten into GS-LO form.
The only basic spider with no inputs is \state{gn}, so any state diagram must contain at least one copy of that; furthermore \state{gn} is a GS-LO diagram.
Therefore, by induction, any diagram can be brought into GS-LO form.

The GS-LO form is not unique, i.e.\ there may be different GS-LO diagrams representing the same state.
It is not clear how to define a unique normal form, but it is possible to reduce the number of diagrams needing to be considered further.

\begin{definition}\label{dfn:rGS-LO}
 A diagram in Spekkens's toy theory is said to be in \emph{reduced GS-LO (or rGS-LO) form} if it is in GS-LO form and satisfies the following additional conditions:
 \begin{itemize}
  \item All the reversible single-toy bit operators belong to the set:
   \begin{equation}\label{eq:reduced_vertex_operators}
    R = \input{tikz_files/set_R.tikz}
   \end{equation}
  \item Two adjacent vertices must not both have vertex operators that include red nodes.
 \end{itemize}
\end{definition}

\begin{theorem}\label{thm:rGS-LO}
 Any toy stabilizer state diagram is equal to some rGS-LO diagram within the graphical calculus.
\end{theorem}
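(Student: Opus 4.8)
The plan is to start from a GS-LO diagram -- which exists for any state diagram by Theorem \ref{thm:GS_LO} -- and rewrite it, without changing the represented state, until it satisfies the two additional conditions of Definition \ref{dfn:rGS-LO}. I would enforce these in two stages: first bring every vertex operator into the set $R$, and then destroy every edge that joins two vertices whose operators contain a red node. The three tools throughout are the local complementation rule (Lemma \ref{lem:local_complementation}), the fixpoint operation (Lemma \ref{lem:fixpoint}), and the normal form for single-toy-bit operators (Lemma \ref{lem:single_toy_bit_operator}); since each of these mirrors its \ZX-calculus counterpart, the rewriting strategy of \cite{backens_zx-calculus_2013} can be transported almost verbatim.

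For the first stage I would record how the two state-preserving moves act on the vertex operators. A local complementation about a vertex $v$ composes the operator at $v$ with a red $01$ phase shift and the operator at each neighbour of $v$ with a green $01$ phase shift, while replacing the underlying graph $G$ by $G\star v$; a fixpoint operation about $v$ composes the operator at $v$ with a red $11$ and each neighbour's operator with a green $11$, leaving the graph unchanged. Thus, modulo changes to the graph and to neighbouring operators, we may adjust the red-phase part of any single vertex operator freely. Writing each operator in the normal form of Lemma \ref{lem:single_toy_bit_operator} and pushing phases through with the spider, $11$-commutation and colour-change rules, a short case analysis -- identical in spirit to the \ZX calculation, where local complementation acts on the local Clifford by right multiplication with a fixed generator -- shows that each of the $24$ single-toy-bit operators can be brought into $R$ by a local complementation about its own vertex followed by spider fusion.

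The interaction between vertices is what makes this stage nontrivial, because reducing the operator at $v$ deposits phases on its neighbours and alters the graph, hence the neighbourhoods themselves. I would handle this exactly as in the \ZX completeness proof: the green $01$ and green $11$ phases placed on a neighbour fuse into the green component of that neighbour's operator by the spider rule, and after re-normalising via Lemma \ref{lem:single_toy_bit_operator} one checks that, with the vertices processed in a suitable order, the number of vertices carrying red-node operators does not increase. This lets the first stage terminate with every operator in $R$.

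For the second stage, whenever two adjacent vertices $v$ and $w$ both carry red-node operators I would apply a local complementation along the edge $\{v,w\}$. As recorded after Lemma \ref{lem:local_complementation}, this composes the operators at $v$ and $w$ each with the colour-changing transposition $(23)$ -- turning a red-node operator of $R$ into a red-free one -- while rewiring the graph by the stated edge-toggling rules and leaving every other vertex operator unchanged; the (now possibly non-reduced) operators at $v$ and $w$ are then put back into $R$ as in the first stage. The part I expect to require the most care, and the main obstacle, is the \textbf{termination argument}: the edge move changes the graph and can create new red-node operators or new offending edges elsewhere, so the two stages must be interleaved. Following \cite{backens_zx-calculus_2013}, I would exhibit a monovariant -- essentially the number of edges between red-node vertices, under an appropriate ordering -- guaranteeing that this interleaving halts in a diagram meeting both conditions of Definition \ref{dfn:rGS-LO}.
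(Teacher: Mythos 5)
Your overall strategy---pass to GS-LO form via Theorem \ref{thm:GS_LO}, use local complementations and fixpoint operations to push every vertex operator into $R$, then use local complementations along edges to eliminate adjacent pairs of red-node operators---is exactly the paper's. The gap is in the termination argument for the second stage, which you yourself flag as the main obstacle and then resolve incorrectly. You assert that the edge move ``can create new red-node operators or new offending edges elsewhere,'' forcing an interleaving of the two stages, and you propose as monovariant the number of edges between red-node vertices. Neither claim survives inspection of the move. A local complementation along $\{u,v\}$ leaves the vertex operator of every toy bit other than $u$ and $v$ unchanged ($\sigma_j'=\sigma_j$ for $j\notin\{u,v\}$), and the clean-up fixpoint operations on $u$ and $v$ only pre-multiply neighbouring operators by a green $11$ phase, which maps $R$ to itself. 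So no new red nodes appear anywhere, the two processed vertices become red-free, and no interleaving with stage one is needed: the number of vertices carrying a red node drops by exactly two per move, giving termination in at most $n/2$ moves. By contrast, your proposed quantity is \emph{not} monotone: the edge-toggling rule for a local complementation along $\{u,v\}$ can toggle edges between third-party vertices $p,q\notin\{u,v\}$ (whenever their neighbourhoods meet $\{u,v\}$ in distinct nonempty sets), and in particular can create a fresh edge between two vertices that both still carry red nodes, so the count of offending edges can go up.

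A smaller imprecision sits in stage one: a single local complementation followed by spider fusion does not suffice for every one of the $24$ operators; the paper uses combinations of local complementations \emph{and} fixpoint operations, and its termination argument is that the total number of red nodes appearing in the normal forms of Lemma \ref{lem:single_toy_bit_operator} (at most $2n$) never increases under the green $01$ and $11$ side-effects deposited on neighbours and strictly decreases at the processed vertex, so at most $2n$ steps suffice with no processing order required. Your ``suitable order'' formulation is vaguer but repairable; the stage-two monovariant is the part that genuinely needs to be replaced.
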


A sketch proof of this theorem can be found in appendix \ref{s:appendix_graph_states}.

Reduced GS-LO diagrams are still not unique, as shown by the following two propositions, sketch proofs of which can be found in the appendix.

\begin{proposition}\label{prop:rGS-LO_transformation1}
 Suppose a rGS-LO diagram contains a pair of neighbouring toy bits $p$ and $q$ in the following configuration, where $a,b\in\{0,1\}$:
 \begin{center}
  \input{tikz_files/rGS-LO_transformation1.tikz}
 \end{center}
 Then a local complementation about $q$, followed by a local complementation about $p$, yields a diagram which can be brought into rGS-LO form by at most two applications of the fixpoint rule.
\end{proposition}

\begin{proposition}\label{prop:rGS-LO_transformation2}
 Suppose a rGS-LO diagram contains a pair of neighbouring toy bits $p$ and $q$ in the following configuration, where $a,b\in\{0,1\}$:
 \begin{center}
  \input{tikz_files/rGS-LO_transformation2.tikz}
 \end{center}
 Then a local complementation along the edge $\{p,q\}$ yields a diagram which can be brought into rGS-LO form by at most two applications of the fixpoint rule.
\end{proposition}

With the definitions and results in this section, state diagrams in the toy theory graphical calculus can be simplified significantly.
By map-state duality, the results can be applied to arbitrary diagrams.

For completeness it remains to be shown that whenever two rGS-LO diagrams represent the same toy state, they can be rewritten into each other using the rewrite rules for the toy theory graphical calculus.

\subsection{Equalities between rGS-LO diagrams}

The graphical calculus is complete for toy theory states if, given any two rGS-LO diagrams representing the same state, we can show that they are equal using the rules of the graphical calculus.
In this section, we exhibit an algorithm for rewriting two diagrams representing the same toy state to be identical.
As rewrite rules are invertible, this is equivalent to being able to rewrite one diagram into the other.
The algorithm is adapted from a similar one for the stabilizer \ZX-calculus \cite{backens_zx-calculus_2013}.

Given two toy state diagrams on the same number of toy bits, we start by pairing up red nodes between the two diagrams.

\begin{definition}
 A pair of rGS-LO diagrams on the same number of toy bits is called \emph{simplified} if there are no pairs of toy bits $p,q$ such that $p$ has a red node in its vertex operator in the first diagram but not in the second, $q$ has a red node in the second diagram but not in the first, and $p$ and $q$ are adjacent in at least one of the diagrams.
\end{definition}

\begin{proposition}
 Any pair of rGS-LO diagrams on $n$ toy bits can be simplified.
\end{proposition}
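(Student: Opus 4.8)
The plan is to argue by well-founded induction on the number of \emph{unpaired} red nodes. Given a pair of rGS-LO diagrams $D_1,D_2$ on the same $n$ toy bits, let $A_1$ be the set of toy bits whose vertex operator contains a red node in $D_1$ but not in $D_2$, let $A_2$ be defined symmetrically, and set $\mu = \abs{A_1}+\abs{A_2}$. Since $A_1\cap A_2=\emptyset$, a bad pair in the sense of the definition consists of some $p\in A_1$ and $q\in A_2$ adjacent in at least one of the two diagrams, so $\mu=0$ already forces the pair to be simplified. It therefore suffices to show that whenever a bad pair exists, one of the two diagrams can be rewritten, staying in rGS-LO form and representing the same toy state, so that $\mu$ strictly decreases; the statement then follows by induction.

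First I would fix a bad pair $(p,q)$ and pass to the diagram witnessing their adjacency, say $D_1$, the case of $D_2$ being symmetric under exchanging the two diagrams and the roles of $p$ and $q$. In $D_1$ the vertex $p$ carries a red node while $q$, being in $A_2$, does not, and the two are joined by a \HadSpek{} edge; since all vertex operators are drawn from the reduced set $R$ of Definition \ref{dfn:rGS-LO}, $p$ has operator green $a\bar a$ topped by a red $01$ and $q$ has a purely green operator. The core move is a local complementation about $p$ (Lemma \ref{lem:local_complementation}), which is always available and is state-preserving. Its renormalisation back into $R$, using the operator normal form of Lemma \ref{lem:single_toy_bit_operator}, is exactly what Proposition \ref{prop:rGS-LO_transformation1} (and, once $q$'s operator has been brought into the shape it requires, Proposition \ref{prop:rGS-LO_transformation2}) carries out: a short sequence of local complementations about $q$ and $p$ followed by at most two applications of the fixpoint rule of Lemma \ref{lem:fixpoint}. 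The intended net effect is to move the red node off $p$, so that $p$ becomes red-free in both diagrams and leaves $A_1$ while the red node either disappears or lands on $q$, which is already red in $D_2$ and so leaves $A_2$; in either case $\mu$ falls.

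I expect the main obstacle to be precisely this red-node bookkeeping through the renormalisation. A local complementation does \emph{not} simply delete $p$'s red node: by Lemma \ref{lem:local_complementation} it prepends a red $01$ \emph{beneath} $p$'s existing operator, producing a form-2 operator red $01$; green $a\bar a$; red $01$ (which for $a\bar a=01$ is just \HadSpek{} by the Euler rule) that lies outside $R$ and must be reabsorbed into the incident graph edges and into the operators of $p$'s neighbours. One must verify that, after this reabsorption, the only vertices whose red status changes are $p$ and $q$, so that $\mu$ genuinely decreases rather than red nodes merely being redistributed; the edge toggles produced by the complementation may create fresh adjacencies between $A_1$ and $A_2$, but because the induction is on $\mu$ and not on the number of bad adjacencies, this does no harm. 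Establishing that the transformations of Propositions \ref{prop:rGS-LO_transformation1} and \ref{prop:rGS-LO_transformation2} leave the red status of all vertices except $p$ and $q$ intact, and handling the cases in which $q$'s green operator is not yet in the exact form those propositions demand (dispatching them either to the other proposition or to an auxiliary reduction), are the remaining technical points; all of these rest on the explicit normal-form computations recorded in the appendix. Granting them, the strict decrease of $\mu$, and hence termination of the simplification procedure, follow immediately.
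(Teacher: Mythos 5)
Your proof takes essentially the same route as the paper's: the paper's own (two-line) argument is precisely to use the moves of Propositions \ref{prop:rGS-LO_transformation1} and \ref{prop:rGS-LO_transformation2} to shift the unpaired red node from $p$ onto the adjacent $q$ in whichever diagram witnesses the adjacency, so that both toy bits become paired and the count of unpaired red nodes drops. Your explicit termination measure $\mu$, the observation that the edge toggles created by local complementation are harmless because the induction is on $\mu$ rather than on bad adjacencies, and the dispatch between the two propositions according to whether $q$'s green phase is of the form $bb$ or $b\bar{b}$ are exactly the details the paper defers to the analogous \ZX-calculus argument.
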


The proof of this proposition is analogous to the stabilizer \ZX-calculus case in \cite{backens_zx-calculus_2013}.
The idea is to use the equivalence operations of rGS-LO diagrams given in propositions \ref{prop:rGS-LO_transformation1} and \ref{prop:rGS-LO_transformation2} to shift red nodes to different toy bits in the diagram.

Simplifying a pair of diagrams is not an arbitrary process: for a simplified pair of diagrams, it is straightforward to decide whether or not they are equal according to the rewrite rules.
Firstly, if there exist red nodes that cannot be paired up between the two diagrams, then the diagrams cannot represent the same state, as shown by the following lemma.

\begin{lemma}\label{lem:unpaired_red}
 Consider a simplified pair of rGS-LO diagrams and suppose there exists an unpaired red node, i.e.\ there is a toy bit $p$ which has a red node in its vertex operator in one of the diagrams, but not in the other. Then the two diagrams are not equal.
\end{lemma}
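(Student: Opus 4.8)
The plan is to work in the binary stabilizer formalism of Section~\ref{s:binary}, where a toy state of maximal knowledge is determined by its isotropic subspace of quadrature variables together with the values assigned to them; in particular, two toy states coincide only if these subspaces coincide. I would first read off the stabilizing quadrature variables of an rGS-LO diagram from Lemma~\ref{lem:fixpoint}: the generator attached to vertex $v$ acts at $v$ itself as $X$, as $X\oplus Z$, or as $Z$ according to whether the reduced vertex operator of $v$ is the identity, a bare green phase ($01$ or $10$), or one of the two operators containing a red node; and it acts on each neighbour $w$ of $v$ as $Z$ or as $X\oplus Z$ according to whether $w$ carries a red node. Thus a red node at $p$ is exactly the statement that the generator at $p$ has no $X$-content at $p$ (being $Z$ there), while every neighbour of $p$ contributes $X$-content at $p$.

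The key device is to isolate toy bit $p$ by composing the green effect \effect{gn} onto output $p$, i.e.\ post-selecting the outcome $Z_p=0$. Since this is an operation applied to the state itself, equal states yield equal $(n-1)$-toy bit residual states, so it suffices to show that the residuals differ. In the diagram \emph{without} a red node at $p$ the green effect commutes through the green vertex operator and, by the spider and copy rules, simply deletes $p$ together with its incident \HadSpek{} edges, leaving the graph state on the remaining vertices with all other vertex operators --- and in particular all other red nodes --- unchanged. In the diagram \emph{with} a red node at $p$ the green effect instead meets the red node and becomes, via the colour-change rule, an effect of the opposite colour (up to a green phase); deleting $p$ then performs a local complementation about $p$ in the manner of Lemma~\ref{lem:local_complementation}, which toggles the $Z$ versus $X\oplus Z$ (equivalently green versus red) status at precisely the neighbours of $p$.

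I would then conclude by induction on the number of toy bits. The base case $n=1$ is immediate: a single toy bit carrying a red vertex operator is an $X$-eigenstate ($\{1,2\}$ or $\{3,4\}$), whereas one carrying only green operators is a $Z$- or $(X\oplus Z)$-eigenstate (one of $\{1,3\},\{2,4\},\{1,4\},\{2,3\}$), and these two families are disjoint. For the inductive step, the \emph{simplified} hypothesis enters exactly where the two residuals are compared: because $p$ carries a red node in only one of the two diagrams, no neighbour of $p$ may be red in the other diagram only while adjacent to $p$, so the neighbourhood toggling produced by the local complementation introduces red-node discrepancies that the untouched red-node pattern of the other residual cannot match. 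After re-simplifying, the reduced pair therefore still possesses an unpaired red node, the inductive hypothesis gives that the residuals are unequal, and this contradicts the assumption that the original diagrams were equal.

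The main obstacle is closing this induction cleanly: one must check that plugging in \effect{gn} again yields an rGS-LO diagram up to the rewrite rules, track precisely how the red-node indicator vectors transform under the two reductions --- unchanged on one side, toggled on the neighbourhood of $p$ on the other --- and verify that the simplified condition genuinely rules out any cancellation of the resulting discrepancy after re-simplification. This bookkeeping is the toy-theory counterpart of the $\pi$-phase reduction used in the corresponding step of the stabilizer \ZX-calculus completeness proof \cite{backens_zx-calculus_2013}, and is the part of the argument that must be adapted rather than merely quoted.
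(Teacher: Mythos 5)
Your strategy is genuinely different from the paper's, and as it stands it has two concrete gaps. First, the computation at the heart of your reduction is not right. With the paper's conventions, $\intf{\state{gn}}=\{1,3\}$ is the $X=0$ state, so \effect{gn} is the $X$-basis effect, not ``$Z_p=0$''; and applying \effect{gn} to a graph-state vertex whose vertex operator is a green phase does \emph{not} delete the vertex and its edges. By the spider rule it merges into the vertex spider and leaves a phased green node attached through \HadSpek{} to all former neighbours --- the analogue of an $X$-measurement on a graph state, which is resolved by a pivot (local complementation along an edge), not by deletion. The clean ``delete $p$ and its incident edges'' behaviour is what the \emph{red} effect \effect{rn} does to a green vertex spider (via the upside-down copy rule, with possible $11$-phases appearing on neighbours). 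Whichever single-bit effect you choose, at least one of your two cases (red node at $p$ present/absent) lands in the pivot situation rather than the simple deletion or single local complementation you describe, so the dichotomy ``unchanged residual'' versus ``residual with neighbourhood toggled'' does not hold as stated. You would also need to exclude the degenerate possibility that the post-selection yields the empty relation, since your argument reads structure off the residual.

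Second, even granting corrected reductions, the induction is not closed. To invoke the inductive hypothesis you must bring both residuals back into rGS-LO form, re-simplify them \emph{as a pair}, and show that an unpaired red node survives; but re-simplification (Propositions \ref{prop:rGS-LO_transformation1} and \ref{prop:rGS-LO_transformation2}) actively moves red nodes between toy bits, so ``the discrepancy cannot cancel'' is precisely the claim that needs proof, and you only assert it. The paper avoids both difficulties at once: instead of a non-invertible effect it composes both diagrams with an \emph{invertible} operator $U$ (the controlled-$Z$-type gadget on the $D_1$-neighbourhood of $p$ together with local maps on $p$, equation \eqref{eq:isolate_p}), so that $D_1=D_2$ iff $U\circ D_1=U\circ D_2$; it then shows directly, by case analysis on the vertex operator of $p$ in $D_2$, that $p$ ends up isolated in a red state in $U\circ D_1$ but either entangled or in a green state in $U\circ D_2$. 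No induction, no re-simplification, and the ``simplified'' hypothesis is used only to guarantee that the $N_1$-neighbours carry green vertex operators in both diagrams. If you want to salvage your route, you would need to replace the single-bit effect by an invertible gadget of this kind, or else prove the preservation-of-unpaired-red-nodes statement that your inductive step silently assumes.
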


This lemma is proved in appendix \ref{s:completeness}.

The existence of unpaired red nodes is not the only sign that a simplified pair of diagrams cannot be equal.
In fact, a simplified pair of diagrams are either identical or they do not represent the same state.

\begin{theorem}\label{thm:state_completeness}
 The two diagrams making up a simplified pair of rGS-LO diagram are equal, i.e.\ they correspond to the same toy theory state, if and only if they are identical.
\end{theorem}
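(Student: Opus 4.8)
The plan is to establish the non-trivial direction, that a simplified pair of rGS-LO diagrams $D_1,D_2$ representing the same toy state must be identical; the converse is immediate since any diagram rewrites to itself. I would first extract the constraints that equality already imposes. By Lemma \ref{lem:unpaired_red}, equal diagrams admit no unpaired red node, so the set $V_R$ of toy bits carrying a red node in their vertex operator is the same in $D_1$ and $D_2$; by the reduced condition of Definition \ref{dfn:rGS-LO}, $V_R$ is moreover an independent set in each underlying graph. Thus the partition of toy bits into those with and without red vertex operators is already forced to agree.

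I would then work in the binary formalism of Section \ref{s:binary}, where a toy stabilizer state is recorded by a check matrix describing its commuting quadrature variables together with the values of those variables. Because this formalism coincides with the stabilizer check-matrix formalism, the argument can follow the completeness proof for the stabilizer \ZX-calculus in \cite{backens_zx-calculus_2013} almost verbatim, with the phase group $\ZZ_2\times\ZZ_2$ replacing $\ZZ_4$ and the $11$-commutation rule replacing the $\pi$-commutation rule. Using Lemma \ref{lem:fixpoint}, the graph state underlying each diagram is stabilized by the quadrature variable carrying an $X$ on $v$ and a $Z$ on every neighbour $w$ of $v$; the reduced vertex operators then transform these generators in a controlled way. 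Reading off the resulting check matrix and values, I would show that the graph adjacency, the set $V_R$, and the green phase labels can each be recovered from the state, so that the diagram is determined by the state once the reduced and simplified conditions are imposed.

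The heart of the argument is to upgrade the equivalence supplied by Theorem \ref{thm:vdN2_toy} -- that equal toy graph states have graphs related by a sequence of local complementations -- to literal equality of the two graphs. Here the simplified hypothesis together with the reduced conditions is essential: any nontrivial local complementation (Lemma \ref{lem:local_complementation}) or local complementation along an edge (Propositions \ref{prop:rGS-LO_transformation1} and \ref{prop:rGS-LO_transformation2}) changes the vertex operators so as to either leave the reduced set or destroy the simplified pairing. I would use this to show that no nontrivial complementation sequence can carry $D_1$ to $D_2$ while keeping both diagrams simplified and reduced, forcing the two graphs to coincide. With the graphs equal and $V_R$ shared, the red part of every vertex operator is pinned down, and equality of the values of the quadrature variables fixes the green phases through the phase-group structure; hence all vertex operators agree and $D_1=D_2$.

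I expect the main obstacle to be exactly this passage from local-complementation equivalence to graph identity: the combinatorial book-keeping that shows the reduced and simplified conditions leave no room for a nontrivial complementation sequence, which is where the independent set $V_R$ and the restricted operator set must be exploited carefully. The phase-group bookkeeping, in contrast, is routine and indeed slightly simpler than in the \ZX-calculus, since every element of $\ZZ_2\times\ZZ_2$ is self-inverse.
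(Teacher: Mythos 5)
Your opening moves are sound and do match the route the paper intends: the ``if'' direction is trivial, Lemma~\ref{lem:unpaired_red} forces the set $V_R$ of red-node toy bits to agree between the two diagrams, and the remaining data is to be compared in the binary formalism of section~\ref{s:binary}, following Theorem~18 of \cite{backens_zx-calculus_2013}, to which the paper simply defers. The portion of your sketch in which you conjugate the stabilizing quadrature variables of Lemma~\ref{lem:fixpoint} by the reduced vertex operators and read off the check matrix and values is essentially that proof: because the vertex operators lie in the restricted set of Definition~\ref{dfn:rGS-LO} and no two red nodes are adjacent (with the simplified condition supplying the same guarantee across the two diagrams), the generator attached to each toy bit can be located inside the stabilizer group and directly reveals that toy bit's neighbourhood and vertex operator, so the graphs and operators coincide toy bit by toy bit.

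The step you then single out as ``the heart of the argument'' is a different strategy, and it has a genuine gap. Theorem~\ref{thm:vdN2_toy} only asserts the \emph{existence} of some sequence of local complementations relating $G_1$ and $G_2$; it gives no control over the intermediate stages, so restricting attention to sequences that ``keep both diagrams simplified and reduced'' is unjustified --- you would have to exclude every path through the LC orbit, a combinatorial problem much harder than the theorem itself. Worse, graph equality cannot be obtained from orbit considerations at all: Propositions~\ref{prop:rGS-LO_transformation1} and \ref{prop:rGS-LO_transformation2} exhibit equal rGS-LO diagrams with \emph{different} underlying graphs, so membership in the same LC orbit together with reducedness does not pin the graph down. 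What excludes this is the simplified condition on the \emph{pair}, and it is exploited through the direct stabilizer (or operational, Lemma~\ref{lem:unpaired_red}-style) comparison sketched above, not by analysing complementation sequences. If you replace your central step by that direct comparison --- which you in fact already describe in your second paragraph --- the argument closes; as written, the passage from LC-equivalence to graph identity does not.
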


The proof of this theorem is analogous to that of theorem 18 in \cite{backens_zx-calculus_2013}.

By map-state duality for the toy theory, as given in \eqref{eq:Choi-Jamiolkowski}, and invertibility of the rewrite rules, theorem \ref{thm:state_completeness} directly implies:

\begin{theorem}
 The red-green calculus is complete for Spekkens' toy bit theory.
\end{theorem}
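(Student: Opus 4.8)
The plan is to deduce the general completeness statement from the already-established completeness for states, Theorem~\ref{thm:state_completeness}, using map-state duality as the bridge. Concretely, suppose $D_1$ and $D_2$ are diagrams with $\intf{D_1} = \intf{D_2}$, representing maps from $n$ to $m$ toy bits. First I would apply the Choi-Jamio{\l}kowski construction of \eqref{eq:Choi-Jamiolkowski}, bending the $n$ inputs of each diagram upward with cups so as to obtain state diagrams $D_1'$ and $D_2'$ on $n+m$ toy bits. Since this manoeuvre involves only cups and caps, which behave topologically through the snake equations (valid in the calculus via the topology meta-rule), the passage from $D_i$ to $D_i'$ is itself a sequence of graphical rewrites, and it is graphically invertible by bending the wires back down. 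Bijectivity of the isomorphism guarantees $\intf{D_1'} = \intf{D_2'}$, so it suffices to prove the state case.

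Next I would normalise. By Theorem~\ref{thm:rGS-LO}, each state diagram $D_i'$ can be rewritten, using the rules of the calculus, into an rGS-LO diagram $E_i$; by soundness this preserves interpretation, so $\intf{E_1} = \intf{E_2}$. I would then simplify the pair $(E_1,E_2)$: invoking the fact that any pair of rGS-LO diagrams can be simplified — which proceeds by applying the local-complementation equivalences of Propositions~\ref{prop:rGS-LO_transformation1} and~\ref{prop:rGS-LO_transformation2} to shift red nodes between toy bits — I obtain a simplified pair $(E_1',E_2')$, again with each $E_i = E_i'$ derivable graphically and $\intf{E_1'} = \intf{E_2'}$.

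At this point Theorem~\ref{thm:state_completeness} performs the decisive step: a simplified pair of rGS-LO diagrams representing the same state must be \emph{identical} as diagrams. Hence $E_1'$ and $E_2'$ are literally the same diagram, so trivially $E_1' = E_2'$. Chaining the graphical equalities yields $D_1' = E_1 = E_1' = E_2' = E_2 = D_2'$ entirely within the calculus, where the last two equalities use invertibility of the rewrite rules to run the simplification and the rGS-LO reduction backwards. Finally, transporting $D_1' = D_2'$ back along the graphically invertible Choi-Jamio{\l}kowski transformation gives $D_1 = D_2$, which is exactly the implication $\intf{D_1} = \intf{D_2} \implies D_1 = D_2$ required for completeness.

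The genuine difficulty of the overall programme lies not in this final assembly but in Theorem~\ref{thm:state_completeness} together with the normal-form results feeding it. If I were proving the matter from scratch, the main obstacle would be showing that simplification collapses the inherent non-uniqueness of rGS-LO diagrams all the way down to literal identity; this is where the structure of the toy phase group $\ZZ_2\times\ZZ_2$ and the graph-state Theorems~\ref{thm:vdN1_toy}--\ref{thm:vdN2_toy} are used most heavily, in close parallel with the stabilizer \ZX-calculus argument of \cite{backens_zx-calculus_2013}. Given those results, however, the present theorem follows as a direct corollary of state-completeness and map-state duality.
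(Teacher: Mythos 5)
Your argument is correct and follows exactly the paper's own route: bend inputs into outputs via the Choi--Jamio{\l}kowski isomorphism, reduce both state diagrams to rGS-LO form, simplify the pair, apply Theorem~\ref{thm:state_completeness} to conclude they are identical, and then invert the rewrites to recover $D_1 = D_2$. Nothing is missing.
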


Equalities between two diagrams in the toy theory graphical calculus can be derived as follows: if the diagrams are not states, bend all inputs around to become outputs.
Bring the two diagrams into GS-LO form and thus into rGS-LO form.
Simplify the pair of diagrams.
Then either the two diagrams are identical, in which case some of the rewrite steps can be inverted to get a sequence of rewrites transforming one diagram into the other, or they are not identical, in which case the two diagrams do not represent the same operator, so there is no equality to derive.

If the diagrams were not states to begin with, the appropriate outputs can be bent back into inputs in all diagram.
This yields a sequence of valid rewrites transforming one of the original diagrams into the other.

\section{Conclusions}
\label{s:conclusions}

We have defined a graphical calculus for Spekkens' toy bit theory and shown that it is universal, sound, and complete for the maximal knowledge fragment of the theory with post-selected measurements.
This means that the graphical calculus has the full power of any formalism for analysing the toy theory.
Our graphical calculus is modelled after the \ZX-calculus, a similar universal, sound, and complete graphical calculus for pure state qubit stabilizer quantum mechanics with post-selected measurements.
Therefore similarities and differences between stabilizer quantum mechanics and the toy bit theory can be analysed entirely graphically.

A potential next step for this research programme is to implement the rewrite rules and algorithms involved in the completeness proof in the software system \emph{Quantomatic}, which enables automated and semi-automated manipulation of diagrams in the \ZX-calculus and similar graphical languages \cite{quantomatic}.
That way, diagrams can be simplified and equalities between toy theory diagrams can be derived automatically.
If the corresponding algorithms for the \ZX-calculus are implemented as well, Quantomatic can compare the two theories automatically.

So far, we have only considered pure state qubit stabilizer quantum mechanics and the maximal knowledge fragment of Spekkens' toy bit theory.
An obvious next step would be to extend the graphical calculi to mixed states in the quantum case and states of less-than-maximal knowledge in the toy theory.
The category-theoretical formulations underlying the graphical calculi can easily be extended in this way using the CPM-construction \cite{selinger_dagger_2007} and these extensions carry over to categorical graphical calculi.

Furthermore, it would be interesting to extend this argument to stabilizer quantum mechanics for higher dimensional systems and the higher-dimensional toy theory.
Some steps in this direction have been made by generalising the \ZX-calculus to qudits and Spekkens' toy theory for systems of dimension greater than two, though it is still unclear whether these graphical languages are complete \cite{ranchin_depicting_2014}.

Rigorous graphical languages have many applications in the analysis of quantum physics and related theories.

\section*{Acknowledgements}

The authors would like to thank Dominic Horsman and Matt Pusey for comments on drafts of this paper.
MB acknowledges financial support from the EPSRC.

\bibliographystyle{eptcs}
\bibliography{graphical_calculus_for_Spekkens}

\newpage
\appendix
\section{Appendix}

\subsection{Proofs of results about graph states, GS-LO diagrams, and rGS-LO diagrams}
\label{s:appendix_graph_states}

Here, we give the proofs for results stated in section \ref{s:toy_diagrams} where they differ significantly from the corresponding proofs in the \ZX-calculus.
The \ZX-calculus is introduced in \cite{coecke_interacting_2011} and extended in \cite{duncan_graph_2009}.
The completeness proof for the stabilizer \ZX-calculus can be found in \cite{backens_zx-calculus_2013}.

\begin{proof}[Lemma \ref{lem:local_complementation}, sketch]
 The proof is similar to the \ZX-calculus case as given by Duncan and Perdrix \cite{duncan_graph_2009}.
 We show here as an example the case of the complete graph on three vertices (rearranged with two inputs at the bottom for ease of reading):
 \begin{center}
  \input{tikz_files/local_complementation_derivation1.tikz}
 \end{center}
 \begin{center}
  \input{tikz_files/local_complementation_derivation2.tikz}
 \end{center}
 The first equality uses the decomposition of \HadSpek{} in terms of red and green phase shifts
 In the second step, the spider rule is used to merge the green phases with their green neighbours.
 Subsequently, the red and green phased spiders are ``pulled apart'', again using the spider law.
 In the fifth step, the colour change law and the fact that \HadSpek{} is self-inverse are used to change the green node at the top into a red one.
 The next step is an application of the bialgebra law.
 The penultimate step uses the fact that \state{rn,label={[rphase]right:$01$}} = \state{gn,label={[gphase]right:$01$}}, which is the case $a=0$ of \eqref{eq:red_green_states} below.
 Lastly, the colour change rule is applied again.

 The full proof then proceeds by induction over the number of vertices in the graph state.
\end{proof}

\begin{proof}[Theorem \ref{thm:GS_LO}, sketch]
 The proof is analogous to the proof of Theorem 7 in \cite{backens_zx-calculus_2013}, noting the following facts:
 \begin{itemize}
  \item Let $a\in\{0,1\}$ and $\bar{a}=a\oplus 1$, then:
   \begin{equation} \label{eq:red_green_states}
    \input{tikz_files/single_toy_state.tikz}
   \end{equation}
   Here, the first step uses the fact that \HadSpek{} is self-inverse and the second step uses the decomposition of \HadSpek{} into red and green phase shifts.
   The third step is an application of the spider law to merge the bottom two nodes, which is again used in the fourth step to pull apart the green node.
   In the fifth step, the bottom red node is copied: this works for both values of $a$.
   The penultimate step, involves dropping the scalar diagram on the left and merging the two red nodes in the non-scalar part by the spider law.
   The last equality is by the colour change law.
  \item Any single toy bit operator can be written as
   \begin{center}
    \input{tikz_files/green-red-green.tikz}
   \end{center}
   for some $a,b,c,d,e,f\in\{0,1\}$.
  \item
   \begin{tikzpicture}[baseline=-.1cm]
	\begin{pgfonlayer}{nodelayer}
		\node [style={gn,label={[gphase]right:$11$}}] (0) at (0, 0) {};
	\end{pgfonlayer}
   \end{tikzpicture} and
   \begin{tikzpicture}[baseline=-.1cm]
	\begin{pgfonlayer}{nodelayer}
		\node [style={rn,label={[rphase]right:$11$}}] (0) at (0, 0) {};
	\end{pgfonlayer}
   \end{tikzpicture} denote the zero scalar.
  \item A loop with a \HadSpek\ node in it disappears:
   \begin{equation}
    \input{tikz_files/HadSpek_loop.tikz}
   \end{equation}
 \end{itemize}
\end{proof}

\begin{proof}[Theorem \ref{thm:rGS-LO}]
 By theorem \ref{thm:GS_LO}, any state diagram in the toy theory is equal to some GS-LO diagram.
 Lemma \ref{lem:single_toy_bit_operator} shows that each vertex operator in the GS-LO diagram can be brought into the form:
 \begin{center}
  \input{tikz_files/single_toybit.tikz},
 \end{center}
 where $a,b,c,d,e,f,g\in\{0,1\}$.
 Note that the cases $c=0=d$ and $f=0=g$ of the above normal forms correspond exactly to the elements of $R$ as defined in \eqref{eq:reduced_vertex_operators}.
 A local complementation about a vertex $v$ pre-multiplies the vertex operator of $v$ with \phase{rn,label={[rphase]right:$01$}} and a fixpoint operation with \phase{rn,label={[rphase]right:$11$}}, so any vertex operator can be brought into one of the above forms by some combination of local complementations and fixpoint operations about the corresponding vertex.
 The other effects of local complementations are to toggle some of the edges in the graph state and to pre-multiply the vertex operators of neighbouring vertices by \phase{gn,label={[gphase]right:$01$}}, whereas  fixpoint operations leave the edges invariant and pre-multiply the vertex operators of neighbouring vertices by \phase{gn,label={[gphase]right:$11$}}.
 The set $R$ is not mapped to itself under repeated pre-multiplication with \phase{gn,label={[gphase]right:$01$}}: this transformation sends the set $\{\phase{gn,label={[gphase]right:$ab$}}\}$ for $a,b\in\{0,1\}$ to itself, but it maps:
 \begin{equation}
  \input{tikz_files/R_pre_multiplied.tikz}
 \end{equation}
 The normal form of a vertex operator contains at most two red nodes.
 Once a vertex operator is in one of the forms in $R$, pre-multiplication by green phase operators does not change the number of red nodes it contains when expressed in normal form.
 Thus the process of removing red nodes from the vertex operators by applying local complementations must terminate after at most $2n$ steps for an $n$-toy bit diagram, at which point all vertex operators are elements of the set $R$.

 With all vertex operators in $R$, suppose there are two adjacent toy bits $u$ and $v$ which both have red nodes in their vertex operators, i.e.\ there is a subdiagram of the form:
 \begin{equation}\label{eq:neighbouring_red_nodes}
  \input{tikz_files/adjacent_red_nodes.tikz}
 \end{equation}
 with $a,b,\in\{0,1\}$.
 A local complementation along the edge $\{u,v\}$ maps the vertex operator of $u$ to:
 \begin{equation}\label{eq:pivoting}
  \input{tikz_files/vertex_operator_u.tikz}
 \end{equation}
 and similarly for $v$.
 After this, if $a=1$, we apply a fixpoint operation to $u$ and if $b=1$ we apply a fixpoint operation to $v$.
 After this, the vertex operators on both $u$ and $v$ are green phase operators.
 Vertex operators of toy bits adjacent to $u$ or $v$ are pre-multiplied with some power of \phase{gn,label={[gphase]right:$11$}}, which maps $R\to R$.
 Thus each such operation removes the red nodes from a pair of adjacent toy bits and leaves all vertex operators in the set $R$.
 Hence after at most $n/2$ such operations, it will be impossible to find a subdiagram as in \eqref{eq:neighbouring_red_nodes}.
 Thus, the diagram is in reduced GS-LO form.
\end{proof}

\begin{proof}[Proposition \ref{prop:rGS-LO_transformation1}, sketch]
 The effect of the local complementations on the vertex operators of $p$ and $q$ is as follows:
 \begin{equation}
  \input{tikz_files/rGS-LO_transformation1_proof.tikz}
 \end{equation}
 If $a=1$, we apply a fixpoint operation to $p$ and if $b=1$, we apply a fixpoint operation to $q$; then the vertex operators of $p$ and $q$ are in $R$.
 The fixpoint operations add \phase{gn,label={[gphase]right:$11$}} to neighbouring toy bits, which maps the set $R$ to itself.
 As fixpoint operations do not change any edges, we do not have to worry about them when considering whether the rest of the diagram satisfies definition \ref{dfn:rGS-LO}.

 The rest of the proof is analogous to the stabilizer QM case in \cite{backens_zx-calculus_2013}.
\end{proof}

\begin{proof}[Proposition \ref{prop:rGS-LO_transformation2}]
 After the local complementation along the edge, the vertex operator of $p$ is given by \eqref{eq:pivoting}.
 For the vertex operator of $q$, we have:
 \begin{equation}
  \input{tikz_files/rGS-LO_transformation2_proof.tikz}
 \end{equation}
 Thus if $a$ or $b$ is 1, we apply a fixpoint operator to the appropriate vertex.
 From the properties of local complementations along edges it follows that the overall transformation preserves the two properties of rGS-LO states.
\end{proof}

\subsection{Proof of completeness result}
\label{s:completeness}

The arguments in the following proof closely follow the proof of Lemma 17 in \cite{backens_zx-calculus_2013}.
As the diagrams are complicated and differ in subtle ways from the \ZX-calculus ones, the proof is nevertheless produced in full here.

\begin{proof}[Lemma \ref{lem:unpaired_red}]
 Let $D_1$ be the diagram in which $p$ has the red node, $D_2$ the other diagram. There are multiple cases:

 \emph{In either diagram, $p$ has no neighbours}: In this case, the overall state factorises and the two diagrams are equal only if the two states of $p$ are the same.
 But:
 \begin{equation}
  \input{tikz_files/distinct_single_states.tikz}
 \end{equation}
 for $a,b,c\in\{0,1\}$, so the diagrams must be unequal.

 \emph{$p$ is isolated in one of the diagrams but not in the other}: We argue in section \ref{s:binary} that, as in stabilizer QM, two toy graph states with local operators are equal only if one can be transformed into the other via a sequence of local complementations with corresponding changes to the local operators.
 As a local complementation never turns a vertex with neighbours into a vertex without neighbours, or conversely, the two diagrams cannot be equal.

 \emph{$p$ has neighbours in both diagrams}: Without loss of generality, assume that $p$ is the first toy bit.
 Let $N_1$ be the set of all toy bits that are adjacent to $p$ in $D_1$, and define $N_2$ similarly.
 The vertex operators of any toy bit in $N_1$ must be green phases in both diagrams.
 In $D_1$, this is because of the definition of rGS-LO diagrams, in $D_2$ it is because the pair of diagrams is simplified.
 Suppose the original diagrams involve $n$ toy bits each.
 Let $G$ be the graph on $n$ vertices (named according to the same convention as in $D_1$ and $D_2$) whose edges are $\{\{p,v\} | v\in N_1\}$.
 Now consider the following diagram:
 \begin{equation}\label{eq:isolate_p}
  \input{tikz_files/isolate_p.tikz}
 \end{equation}
 where the ellipse labelled $G$ denotes the toy graph state corresponding to $G$, except that each vertex in the graph has not only an output but also an input.
 Call this diagram $U$.
 It is easy to see that $U$ is invertible: composing it with itself upside-down yields the identity.
 Therefore composing this diagram with $D_1$ and $D_2$ will yield two new diagrams which are equal if and only if $D_1=D_2$.
 We will denote the new diagrams by $U\circ D_1$ and $U\circ D_2$ and show that, no matter what the properties of $D_1$ and $D_2$ are (beyond the existence of an unpaired red node on $p$),
 \begin{mitem}
  \item in $U\circ D_1$, the toy bit $p$ is in state \state{rn} or \state{rn,label={[rphase]right:$11$}};
  \item in $U\circ D_2$, $p$ is either entangled with other toy bits, or in one of the states \state{gn,label={[gphase]right:$ab$}}, where $a,b\in\{0,1\}$.
 \end{mitem}
 By the arguments used in the first two cases, this implies that $U\circ D_1\neq U\circ D_2$ and therefore $D_1\neq D_2$.

 Let $n=\abs{N_1}$, $m=\abs{N_1\cap N_2}$, and suppose the toy bits are arranged in such a way that the first $m$ elements of $N_1$ are those which are also elements of $N_2$, if there are any.
 Consider first the effect on diagram $D_1$.
 The local operator on $p$ combines with the single-toy bit operators from $U$ to:
 \begin{equation}
  \input{tikz_files/local_operator_on_p.tikz},
 \end{equation}
 where $a\in\{0,1\}$.
 As green phase shifts can be pushed through other green nodes, the subdiagram involving $p$ and the elements of $N_1$ in $U\circ D_1$ is equal to:
 \begin{equation}
  \input{tikz_files/U_after_D1.tikz}
 \end{equation}
 Here, $b_1,\ldots,b_n,c_1,\ldots,c_n\in\{0,1\}$.
 Note that at the end $p$ is isolated and in the state \state{rn,label={[rphase]right:$aa$}}.
 The fact that we have ignored all toy bits not originally adjacent to $p$ in $D_1$ does not change that.

 Next consider $U\circ D_2$.
 As $N_1$ is not in general equal to $N_2$, the subdiagram consisting of $p$ and vertices in $N_1$ looks as follows:
 \begin{center}
  \input{tikz_files/U_after_D2.tikz}
 \end{center}
 where $l=m+1$ and $d,e,f_1,\ldots,f_n,g_1,\ldots,g_n\in\{0,1\}$.
 Note that we neglect edges that do not involve $p$ and also edges between $p$ and vertices not in $N_1$.
 We will now distinguish different cases, depending on the values of $d$ and $e$.

 If $d=0,e=1$ apply a local complementation about $p$.
 This does not change the edges incident on $p$:
 \begin{center}
  \input{tikz_files/01_on_p_1.tikz}\input{tikz_files/01_on_p_2.tikz}
 \end{center}
 \begin{center}
  \input{tikz_files/01_on_p_3.tikz}\input{tikz_files/01_on_p_4.tikz}
 \end{center}
 \begin{center}
  \input{tikz_files/01_on_p_5.tikz}
 \end{center}
 Now if $N_1=N_2$, $p$ has no more neighbours and is in the state \state{rn,label={[rphase]right:$01$}}.
 This is not the same as the state $p$ has in diagram 1, so the diagrams are not equal.
 Else, after the application of $U$, $p$ still has some neighbours in diagram 2.
 Local complementations do not change this fact.
 Thus the two diagrams cannot be equal.
 The case $d=1,e=0$ is entirely analogous, except that there is a fixpoint operation in addition to the local complementation at the beginning.

 If $d=e=0$, there are two sub-cases.
 First, suppose there exists $v\in N_2$ such that $v\notin N_1$.
 Apply a local complementation about this $v$.
 This operation changes the vertex operator on $p$ to \phase{gn,label={[gphase]right:$01$}}.
 It also changes the edges incident on $p$, but the important thing is that $p$ will still have at least one neighbour.
 Thus we can proceed as in the case $d=0,e=1$.

 Secondly, suppose there is no $v\in N_2$ which is not in $N_1$.
 Since $N_2\neq\emptyset$ ($N_2=\emptyset$ corresponds to the case ``$p$ has no neighbours in $D_2$'', which was considered above), we must then be able to find $v\in N_1\cap N_2$.
 The diagram looks as follows, where now $m>0$ (again, we are ignoring edges that do not involve $p$):
 \begin{center}
  \input{tikz_files/00_on_p_1.tikz}\input{tikz_files/00_on_p_2.tikz}
 \end{center}
 To show that the two diagrams are unequal it suffices to show that in diagram 2 the state of $p$ either factors out, but is not \state{rn} or \state{rn,label={[rphase]right:$11$}}, or that it remains entangled with other toy bits.
 We are thus justified in ignoring large portions of the above diagram to focus only on $p$, $v$ and the edge between the two.
 In particular, we will ignore for the moment the edges between $p$ and toy bits other than $v$, as well as the last \HadSpek\ on $p$.
 Then:
 \begin{center}
  \input{tikz_files/p20_1a.tikz}
 \end{center}
 \begin{center}
  \input{tikz_files/p20_1b.tikz}
 \end{center}
 where for the second equality we have applied a local complementation to $v$ and used the Euler decomposition, the third equality follows by a local complementation on $p$, and the last one comes from the merging of $p$ with the green node in the bottom left.
 Note that, in the end, $p$ and $v$ are still connected by an edge.
 None of the operations we ignored in picking out this part of the diagram will change that.
 Thus, as before, the state of $p$ cannot be the same as in diagram 1.
 The two diagrams are unequal.

 The case $d=e=1$ is analogous to $d=e=0$, except in either sub-case we start with a fixpoint operation on the chosen $v$.

 We have thus shown that a simplified pair of rGS-LO diagrams are not equal if there are any unpaired red nodes.
\end{proof}

\end{document}